\newcommand{\EE}{\mathbb{E}}
\newcommand{\PP}{\mathbb{P}}
\newcommand{\RR}{\mathbb{R}}
\newcommand{\Qit}{Q_t^i}
\newcommand{\Pit}{P_t^i}
\newcommand{\Git}{G_t^i}
\newcommand{\calG}{\mathcal{G}}
\newcommand{\frakG}{\mathfrak{G}}
\newcommand{\calQ}{\mathcal{Q}}
\newcommand{\cond}{\, | \,}
\newcommand{\poi}[1]{\mathrm{Poisson}(#1)}
\newcommand{\upth}{^{\textrm{th}}}
\newcommand{\tp}{TP(2)}
\newcommand{\Bin}{\operatorname{Binomial}}
\newcommand{\var}{\operatorname{Var}}
\newcommand{\tmrca}{T}
\newcommand{\nmrca}{M}
\newcommand{\sorted}{F}
\newcommand{\arxiv}[1]{#1}
\newcommand{\noarxiv}[1]{}
\newcommand{\QEDclosed}{\mbox{\rule[0pt]{1.3ex}{1.3ex}}}
\newenvironment{proof}{\noindent \emph{Proof:} \newline \noindent}{\QEDclosed}
\journal{Theoretical Population Biology}
\newtheorem{pro}{Proposition}
\newtheorem{defn}{Definition}
\newtheorem{lem}{Lemma}
\newtheorem{cor}{Corollary}
\title[Genetics and the MRCA]{To what extent does genealogical ancestry
imply genetic ancestry?}
 \author{Frederick A. \ Matsen}
 \address{
  Department of Statistics \#3860 \\
  University of California at Berkeley \\
  367 Evans Hall \\
  Berkeley, CA 94720-3860 \\
  U.S.A.}
\email{matsen@berkeley.edu}
\urladdr{http://www.stat.berkeley.edu/users/matsen/}
\author{Steven N.\ Evans}
\address{
  Department of Statistics \#3860 \\
  University of California at Berkeley \\
  367 Evans Hall \\
  Berkeley, CA 94720-3860 \\
  U.S.A.}
\email{evans@stat.Berkeley.EDU}
\urladdr{http://www.stat.berkeley.edu/users/evans/}
\thanks{FAM is funded by the Miller Institute for Basic Research in
Science at the University of California, Berkeley.  SNE is supported
in part by NSF grant DMS-0405778. Part of this research was conducted
during a visit to the Pacific Institute for the Mathematical
Sciences.}
\keywords{most recent common ancestor, diploid, coalescent, branching process,
total positivity, monotone likelihood ratio}
\begin{document}
\noarxiv{
\begin{frontmatter}

\title{To what extent does genealogical ancestry imply genetic ancestry?}

 \author[Berk]{Frederick A. \ Matsen\corauthref{cor}}
 \corauth[cor]{Corresponding author.}
 \ead{matsen@berkeley.edu}
 \ead[url]{http://www.stat.berkeley.edu/users/matsen/}
 \author[Berk]{Steven N.\ Evans}
 \ead{evans@stat.Berkeley.EDU}
 \ead[url]{http://www.stat.berkeley.edu/users/evans/}

 \address[Berk]{
  Department of Statistics, University of California at Berkeley, 367
  Evans Hall, Berkeley, CA 94720-3860 U.S.A.}

\begin{keyword}
most recent common ancestor \sep diploid \sep coalescent \sep branching process
\sep total positivity \sep monotone likelihood ratio
\end{keyword}
\end{frontmatter}
}

\arxiv{ \maketitle }

\noarxiv{ \newpage }

\arxiv { \begin{abstract} }
  \noarxiv { \noindent \textbf{ Abstract } }

  Recent statistical and computational analyses have shown that a
  genealogical most recent common ancestor (MRCA) may have lived in
  the recent past \citep{chang99, rohde04}. However, coalescent-based
  approaches show that genetic most recent common ancestors 
  for a given non-recombining locus are typically much more
  ancient \citep{kingman82a, kingman82b}. It is not
  immediately clear how these two perspectives interact. This paper
  investigates relationships between the number of descendant
  alleles of an ancestor allele and the number of genealogical descendants
  of the individual who possessed that allele for a simple
  diploid genetic model extending the genealogical model of
  \citet{chang99}.
\arxiv{ \end{abstract} }

\noarxiv{ \newpage }
 
\section{Introduction and model}
Joseph Chang's 1999 paper \citep{chang99} showed that a well-mixed
closed diploid population of $n$ individuals will have a genealogical common
ancestor in the recent past. Specifically, the paper showed that if
$\tmrca_n$ is the number of generations back to the most recent common
ancestor (MRCA) of the population, then $\tmrca_n$ divided by $\log_2
n$ converges to one in probability as $n$ goes to infinity. His paper initiated a discussion
in which many of the leading figures of population genetics expressed
interest in the relationship between the genealogical and genetic
perspectives for such models \citep{discussion}. For example, Peter Donnelly wrote
``[r]esults on the extent to which common ancestors, in the sense of
[Chang's] paper, are ancestors in the genetic sense... would also be
of great interest'' \citep{discussion}. Every other discussant also
either discussed the relationship of Chang's work to genetics or
expressed interest in doing so.

Given this interest, surprisingly little work has been done
specifically about the interplay between the two perspectives. Wiuf
and Hein, in their reply, wrote three paragraphs containing some
simple initial observations \citep{discussion}. Some simulation work
has been done by \citet{socsim} with a more realistic population model.  
In a related though different vein, \citet{mohlesagitov03} derived limiting results for the diploid
coalescent, in the classical setting of a small sample from a
large population.

In an interesting series of papers, Derrida, Manrubia, Zanette, and
collaborators \citep{derridaEAStatProp99, derridaEAPhysica00, 
derridaEATPB00, manrubiaEAAmSci03} have investigated the
distribution of the number of repetitions of ancestors in a
genealogical tree, as well as the degree of concordance between the
genealogical trees for two distinct individuals. Our paper, on the
other hand, is concerned with correlations between the number of
genealogical descendants of an individual and the number of descendant
alleles of that individual. The interesting time-frame in our paper
is different than theirs: they focus on the period substantially after
$\tmrca_n$, while for us any interesting correlation is erased with
high probability after time about $1.77 \tmrca_n$.


Our paper attempts to  connect the 
genealogical and genetic points of view by investigating several
different questions concerning the interaction of  genealogical
ancestry and genetic ancestry in a diploid model incorporating Chang's
model. In classical Wright-Fisher fashion, we consider $2n$
alleles contained in $n$ diploid individuals.  Each discrete generation
forward in time, every individual selects two alleles from the
previous generation independently and uniformly to ``inherit.'' If an
individual $X$ at time $t$ inherits genetic information from an
individual $Y$
at time $t-1$, then we consider $Y$ to be a
``parent'' of $X$ in the genealogical sense.  As with Chang's model,
the two parents are permitted to be the same individual
and each allele of a child may descend
from the same parent allele.  We illustrate the basic operation of the
model in Figure~\ref{fig:example}. Each individual is represented as a
circle, and each of a given individual's alleles are represented as dots
within the circle. Time increases down the figure and inheritance of
alleles is represented by lines connecting them. 

\begin{figure}[h]
\label{fig:example}
\arxiv{
\includegraphics[angle=0,scale=.45]{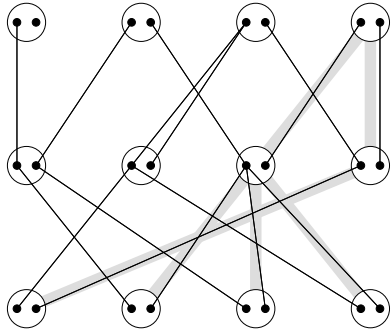}
}
\caption{An example instance of our model with four individuals and
three generations. Time increases moving down the diagram. 
  The two alleles of each individual are depicted as two dots
  within the larger circles; a thin black line indicates
  genetic inheritance, i.e. the lower allele is descended from the
  upper allele. 
  This sample genealogy demonstrates that the genealogical
  MRCA need not have
  any genetic relation to present-day individuals. 
  The individual at the far right on the top row is in
  this case the (unique) MRCA as demonstrated by the thick gray lines,
  however none of its genetic material is passed onto the present day.
  } 
\end{figure}

We have chosen notation in order to fit with Chang's original article.
The initial generation will be denoted $t=0$ and other
generations will be counted forwards in time; thus the parents of the
$t=1$ generation will be in the $t=0$ generation, and so on. The $n$
individuals of generation $t$ will be denoted $I_{t,1}, \ldots,
I_{t,n}$. The two alleles present at a given locus of
 individual $I_{t,i}$ will be labeled
$A_{t,i,1}$ and $A_{t,i,2}$. Using this notation, each allele
$A_{t,i,c}$ of generation $t$ selects an allele $A_{t-1,j,d}$ uniformly and
independently from all of the alleles of the previous generation;
given such a choice we say that allele $A_{t,i,c}$ 
is \emph{descended genetically} from allele
$A_{t-1,j,d}$. 
We define more distant ancestry recursively: allele $A_{t,i,c}$ is descended
from allele $A_{t',j,d}$ if $t > t'$ and there exists a $k$ and $e$ such that
allele $A_{t,i,c}$ is descended genetically from allele
$A_{t-1,k,e}$ and allele $A_{t-1,k,e}$ is
descended from or is the same as allele $A_{t',j,d}$.

One can make a similar recursive definition of genealogical ancestry
that matches Chang's notion of ancestry: individual $I_{t,i}$ is
\emph{descended genealogically} from individual $I_{t',j}$ if $t > t'$
and there exists a $k$ such that individual $I_{t,i}$ is a parent of
individual $I_{t-1,k}$ and individual $I_{t-1,k}$ is descended from or
is the same as individual $I_{t',j}$.  

Define $\calQ_t^{i}$ to be the alleles that are genetic descendants at
time $t$ of the two alleles present in individual $I_{0,i}$,
and let $Q_t^{i}$ be the number
of such alleles. We will call the elements of $\calQ_t^{i}$ the
\emph{descendant alleles} of individual $I_{0,i}$.
Define $\calG_t^i$ to be the genealogical descendants at time $t$ of the
individual $I_{0,i}$, and let $G_t^i$ be the number
of such individuals. We will say that a 
(genealogical) most recent common ancestor (MRCA) first
appears at time $t$ if there is an
individual $I_{0,i}$ in the population
at time $0$ such that  $\Git = n$ and $G_s^j
< n$ for all $j$ and $s < t$; that is, individual $i$ 
in generation $0$ is a genealogical
ancestor of all individuals in generation $t$, but there
is no individual in generation $0$ that is a
a genealogical ancestor of all individuals in any 
generation previous to generation $t$.
Let $\tmrca_n$ denote the generation number
at which the MRCA first appears. The main conclusion of Chang's 1999 paper is that the
ratio $\tmrca_n / \log_2 n$ converges to one in probability as $n$
tends to infinity.

Our intent is to investigate the degree to which genealogical
ancestry implies genetic ancestry. Unsurprisingly, historical
individuals with more genealogical descendants will have more
descendant alleles in expectation: in
Proposition~\ref{pro:monot} we show that $\EE[Q_t^i \cond G_t^i = k]$
is a super-linearly increasing function in $k$. 
However, in any realization of the stochastic process,
individuals with more genealogical descendants need not have more
descendant alleles. For example, in Figure~\ref{fig:example} we show a
case where the MRCA has no genetic relationship to any present day
individuals. In the above
notation, $G_2^4 = n = 4$ and yet $Q_2^4 = 0$. 


Another approach is based on the rank of $\Git$. Loosely speaking, we
are interested in the number of descendant alleles of the
generation-$t$ individual with the $x$th most genealogical
descendants.  More rigorously, we consider the renumbering
(opposite to the way rank is typically defined in statistics)
$\sorted(t,1), \ldots, \sorted(t,n)$ of the indices $1, \ldots, n$
such that
\[
G_t^{\sorted(t,1)} \geq \cdots \geq G_t^{\sorted(t,n)}
\]
and if $G_t^{\sorted(t,i)} = G_t^{\sorted(t,j)}$ then fix
$\sorted(t,i) < \sorted(t,j)$ when $i < j$. 
We then investigate $\left\{Q^{\sorted(t,k)}: 1 \leq k \leq n
\right\}$.
These quantities give us concrete information about our main question
in a relative sense: how much do individuals with many genealogical
descendants contribute to the genetic makeup of present-day
individuals compared to those with only a few? In
Figure~\ref{fig:qOverTime} we simulate our process 10000
times and then take an average for each time step, approximating 
$\EE\left[Q_t^{\sorted(t,k)}\right]$.  

\begin{figure}[h]
  \begin{center}
    \arxiv{
    \includegraphics[width=10cm]{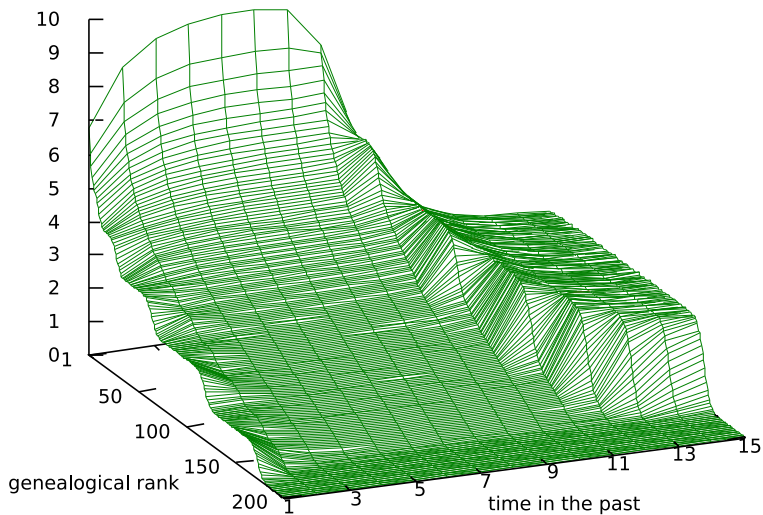}
    }
  \end{center}
  \caption{The expected number of descendant alleles from 
  historical individuals
  sorted by number of genealogical descendants.
  Results by simulation of a population of size 200.
  For example, the value at
  ``genealogical rank'' 50 and time 2 is the expected number of
  alleles in the current population which descend
  from one or other of the two
  alleles present in the 
  individual two generations ago who had no more
  genealogical descendants in the present population
  than did 49 other individuals in the population two generations ago.
  As described in the text, this curve attains an
  interesting characteristic shape around generation 3 that lasts
  until generation 8. We investigate that shape in
  Figure~\ref{fig:branching} and Proposition~\ref{pro:main}.}
  \label{fig:qOverTime}
\end{figure}

After several generations, the curve depicting
$\EE\left[Q_t^{\sorted(t,k)}\right]$
acquires a characteristic shape which persists for
some time, in this figure between time 3 and time 8.
In order to explain what this curve is, we need to
introduce some elementary facts about branching processes.

Recall that a branching process is a 
discrete time Markov process that tracks the population
size of an idealized population \citep{athreyaNeyBranching72, MR2059709}.  
Each individual of generation $t$
produces an independent random number of offspring in generation $t+1$
according to some fixed probability distribution (the {\em offspring
distribution}). This distribution is the same across all
individuals. We will use the Poisson(2) branching process where the
offspring distribution is Poisson with mean $2$ and write $B_t$
for the number of individuals in the
$t\upth$ generation
starting with one individual at time $t=0$.  
It is a standard fact that
the random variables $W_t = B_t / 2^t$ converges almost surely as
$t \rightarrow \infty$ to a
random variable $W$ that is
strictly positive on the
event that the branching process doesn't die out
(that is, on the event that $B_t$ is strictly positive for all $t \ge 0$)
-- cf. Theorem~8.1 of \citep{athreyaNeyBranching72}.  Denote by
$R$ the distribution of the limit random variable $W$. The probability
measure $R$ is diffuse except for an atom at $0$ (that is,
$0$ is the only point to which $R$ assigns non-zero mass).  Also,
the support of $R$ is the whole of $\RR_+$ (that is, every open 
sub-interval of $\RR_+$ is assigned strictly positive mass by $R$).

Returning to our discussion of
$\EE\left[Q_t^{\sorted(t,k)}\right]$, define a 
non-increasing function
$\gamma_{t,n}(c): (0,1) \rightarrow \RR_+$ by 
\[
\gamma_{t,n}(c) 
= \mathbb{E}\left[Q_t^{F(t,\lfloor cn \rfloor})\right],
\]
and define a non-increasing, continuous function 
$\beta: (0,1) \rightarrow \RR_+$ by
\begin{equation}
\label{E:def_beta}
\begin{split}
\beta(c) 
& = \min\{r \ge 0: R\left( (r/2,\infty) \right) \le c\} \\
& = \min\{ r \ge 0: R\left([0,r/2]\right) \ge 1-c\}. \\
\end{split}
\end{equation}
That is, $\beta(c)$ is the $(1-c)^{\mathrm{th}}$ quantile
of $2W$, where the random variable
$W$ is the limit of the normalized
Poisson(2) branching process introduced above. Note
that the function $\beta$ is strictly
decreasing on the interval $(0, 1 - R(\{0\}))$;
that is, $\beta(c)$ is the {\bf unique} value $r$
for which $R( (r/2,\infty)) = c$ when $0 < c < 1 - R(\{0\})$. 
We see experimentally
that $\gamma_{6,200}$ is quite close to $\beta$ in
Figure~\ref{fig:branching}, and establish a convergence result
in Proposition~\ref{pro:main}.
Although a closed-form
expression for the distribution $R$ is not available,
there is a considerable amount known about this classical object
\citep{vanMieghemLimitRV05}. Note that the long-time behavior in
Figure~\ref{fig:qOverTime} is easily explained: it is simply the
uniform distribution across only the common ancestors, that form $1 -
e^{-2} \simeq 0.864$ of the population.

\begin{figure}
  \begin{center}
    \arxiv{
    \includegraphics[height=5cm]{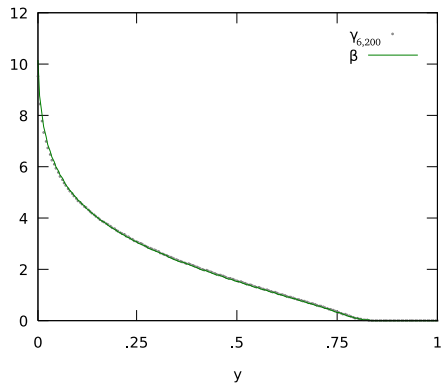}
    }
  \end{center}
  \caption{A plot of $\gamma_{6,200}$ and $\beta$, showing
  experimentally that the characteristic shape in
  Figure~\ref{fig:qOverTime} is very close to the ``tail-quantile'' curve
  of a normalized Poisson(2) branching process. 
  The curve for $\gamma_{6,200}$ was taken from Figure~\ref{fig:qOverTime}.
  To construct the curve for
  $\beta$, we wrote a subroutine that simulated 200 Poisson(2) branching
  processes simultaneously, then sorted the normalized results after 10
  generations. This
  subroutine was run 10000 times and the average was taken. Note that
  the distribution had stabilized after 10 generations.
  }
  \label{fig:branching}
\end{figure}

Thus far we have examined the connection between
genealogical ancestry and genetic ancestry in the population as a
whole; one
may wonder about the number of descendants of the MRCA itself.
Unfortunately, the story there is
not as simple as could be desired. For example, there are usually
multiple MRCAs appearing (by definition) in the same generation, and the
expected number depends on $n$ in a surprising way
(see Figure~\ref{fig:totNMrca}). We investigate this genealogical
issue and related genetic questions
in Section~\ref{sec:mrca}.

\section{Monotonicity of the number of descendant alleles in terms of genealogy}

In this section we prove the following result.
\begin{pro}
  \label{pro:monot}
  For each time $t \ge 2$, the function
  $k \mapsto k^{-1} \, \EE[Q_t^i \cond G_t^i = k]$,  $0 \leq k \leq n$,
  is strictly increasing.
\end{pro}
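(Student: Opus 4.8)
We need to show that $k \mapsto k^{-1}\,\EE[Q_t^i \mid G_t^i = k]$ is strictly increasing in $k$ for $t \geq 2$. The plan is to condition on the set of $k$ genealogical descendants and to track how the two alleles of the founding individual propagate through exactly that set of individuals. The key structural observation is this: the event $\{G_t^i = k\}$ together with the identity of the descendant set $S$ with $|S| = k$ is a purely genealogical event, and conditionally on it, the two founder alleles execute a process of "infecting" alleles that is confined, at every generation, to individuals in the genealogical-descendant subtree. So I would first set up notation: let $S_s$ denote the generation-$s$ genealogical descendants of $I_{0,i}$ (so $|S_t| = k$ on our event), and let $X_s$ be the number of alleles among the individuals of $S_s$ that are genetic descendants of the two founder alleles. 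Then $Q_t^i = X_t$, and I want $\EE[X_t \mid G_t^i = k, \text{descendant set data}] / k$ to be increasing in $k$.

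The second step is to compute $\EE[X_t]$ via a conditional one-generation recursion. Given the full genealogical descendant structure and given $X_{s}$, each allele of each individual in $S_{s+1}$ independently picks a parent allele; the crucial point is that by definition of genealogical descent, at least one of the two parent choices of each individual in $S_{s+1}$ lands in $S_{s}$ (that's what makes it a genealogical descendant), but a priori it could be either one or both. Here the conditioning on the genealogy matters: I would argue that conditionally on the genealogical descendant sets $S_0, \ldots, S_t$, the allele-level inheritance is still a product of uniform choices subject only to the constraint that each individual in $S_{s+1}$ has at least one parent in $S_s$. From this one derives a formula for $\EE[X_{s+1} \mid X_s]$ that is affine in $X_s$ with a slope and intercept depending on $|S_s|$, $|S_{s+1}|$, and $n$ — roughly, each descendant allele is "hit" with probability proportional to $X_s / (2|S_s|)$ adjusted for the constraint. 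Iterating the affine recursion from $X_0 = 2$ gives $\EE[Q_t^i \mid \text{genealogy}]$ as an explicit function of the sequence $|S_0|, \dots, |S_t| = k$, and then averaging over genealogies compatible with $G_t^i = k$ gives $\EE[Q_t^i \mid G_t^i = k]$.

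The third step is the monotonicity argument proper. The cleanest route is a coupling: given a genealogy realizing $G_t^i = k$, I would show how to augment it to one realizing $G_t^i = k+1$ by "adding" one extra genealogical descendant at generation $t$ and some chain of ancestors leading to it, in a way that can only increase (stochastically, or at least in conditional expectation) the number of infected alleles while the normalization $k$ increases by $1$ — and crucially the \emph{per-descendant} infected fraction does not drop. The super-linearity (i.e. that it is $k^{-1}\EE[\cdots]$ rather than $\EE[\cdots]$ itself that increases) should fall out of the fact that a larger descendant set means the founder alleles have, proportionally, more "room" to have spread and less chance of being diluted to extinction — formally, the affine recursion above has the property that the fixed-point fraction is larger when the $|S_s|$ grow faster, and a descendant set of size $k+1$ forces faster growth on average.

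The main obstacle I anticipate is the conditioning subtlety in step two: one must be careful that conditioning on the genealogical descendant sets $S_0,\dots,S_t$ does not distort the allele-inheritance distribution in a way that breaks the clean affine recursion — in particular the events "$I_{s+1,j}$ has exactly one parent in $S_s$" versus "both parents in $S_s$" are not independent across $j$, and their joint law is exactly what the genealogical conditioning pins down. Handling this likely requires either (a) working with the joint law of (number of one-parent vs. two-parent descendants) and showing the recursion for $\EE[X_{s+1}\mid X_s]$ only depends on that joint law through its mean, or (b) avoiding the issue entirely by conditioning on \emph{more} — the full parent-assignment of genealogical edges — computing $\EE[Q_t^i \mid \text{full genealogy}]/k$ and proving monotonicity at that finer level, then averaging. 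Option (b) is probably safer. The requirement $t \geq 2$ presumably enters because at $t = 1$ the map is exactly linear (the founder's two alleles are hit by each child independently with the same per-child rate, giving $\EE[Q_1^i \mid G_1^i = k]$ exactly linear in $k$), so strictness needs at least one further generation of compounding.
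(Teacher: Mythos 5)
Your setup is on the right track: the one\nobreakdash-generation recursion for $\EE[Q_{t+1}^i]$ in terms of $\EE[Q_t^i]$ conditioned on descendant counts is indeed the first half of the paper's argument (their equation for $k^{-1}\,\EE[Q_{t+1}^i \cond G_t^i=r,\,G_{t+1}^i=k]$, which comes out affine in $\EE[Q_t^i \cond G_t^i=r]$ with an $r$\nobreakdash-dependent coefficient), your explanation of why $t\ge 2$ is needed ($\alpha_1(k)=k^{-1}\EE[Q_1^i\cond G_1^i=k]$ is constant in $k$) matches the paper exactly, and your worry about whether conditioning on the genealogical descendant sets distorts the allele\nobreakdash-inheritance law is a legitimate and well\nobreakdash-placed concern (the resolution is that, given the identity of the generation\nobreakdash-$(t+1)$ descendants, the parent choices of distinct individuals remain independent because the conditioning event is a product event; only the per\nobreakdash-individual conditional law changes, to ``uniform subject to at least one parent in $\calG_t^i$'').

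The genuine gap is in your third step, which is where the entire difficulty of the proposition lives. You propose to get monotonicity in $k$ by a coupling that ``adds one extra genealogical descendant and a chain of ancestors,'' but conditioning on $G_t^i=k$ versus $G_t^i=k+1$ reweights the law of the \emph{whole} genealogy, and there is no canonical surgery on realizations that transports one conditional law to the other; your fallback (b) of proving monotonicity per genealogy and then averaging does not escape this either, because averaging an increasing functional over the conditional law given $G_t^i=k$ only yields something increasing in $k$ if that conditional law is itself stochastically increasing in $k$ --- which is precisely the unproved assertion. The paper supplies the missing mechanism with two ingredients you do not have: (i) an induction on $t$, reducing the claim to showing that $k\mapsto\EE[f(G_t^i)\cond G_{t+1}^i=k]$ is strictly increasing for the strictly increasing function $f(r)$ produced by the one\nobreakdash-step recursion and the inductive hypothesis that $\alpha_t$ is non\nobreakdash-decreasing; and (ii) the strict total positivity of order two (equivalently, monotone likelihood ratio) of the pair $(G_t^i,G_{t+1}^i)$, which follows from an explicit conditional\nobreakdash-binomial computation and is exactly the statement that the law of $G_t^i$ given $G_{t+1}^i=k$ is stochastically increasing in $k$. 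Without some substitute for (ii) --- a concrete distributional input about the Markov chain $(G_t^i)_t$, not just about a fixed genealogy --- your argument cannot close, and your appeal to the ``fixed\nobreakdash-point fraction of the affine recursion'' is heuristics rather than proof.
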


The key observation in the proof of Proposition \ref{pro:monot}
will be that the random variables $G_t^i$ and $G_{t+1}^i$ enjoy the
property of {\em total positivity} investigated extensively in the
statistical literature following \citet{karlinTotalPositivity68} (see,
for example, \citep{MR650893}). 

\begin{defn}
A pair of random variables $(X,Y)$ has a \emph{strict
\tp}\ joint distribution if
\[
\PP\left\{ X=x,Y=y \right\} 
\PP\left\{ X=x',Y=y' \right\} >
\PP\left\{ X=x,Y=y' \right\} 
\PP\left\{ X=x',Y=y \right\} 
\]
for all $x < x'$ and $y < y'$ such that the left-hand side is strictly
positive.
\end{defn}

The proof of the next result is clear.

\begin{lem}
The following are equivalent to strict \tp\ 
for $x < x'$ and $y < y'$:
\begin{align}
  \frac{\PP\left\{ Y=y' \cond X=x' \right\}} {\PP\left\{ Y=y' \cond
  X=x \right\}} & >
  \frac{\PP\left\{ Y=y \cond X=x' \right\}} {\PP\left\{ Y=y \cond X=x \right\}}
  \label{eq:tp2c} \\
  \frac{\PP\left\{ X=x' | Y=y' \right\}} {\PP\left\{ X=x' \cond Y=y
  \right\}} & >
  \frac{\PP\left\{ X=x | Y=y' \right\}} {\PP\left\{ X=x \cond Y=y
  \right\}}.
  \label{eq:tp2d}
\end{align}
\end{lem}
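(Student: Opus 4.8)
The plan is to derive \eqref{eq:tp2c} and \eqref{eq:tp2d} from the defining strict \tp\ inequality by dividing both sides through by strictly positive quantities; since each such division is reversible, the three conditions are equivalent. Fix $x < x'$ and $y < y'$. Assume that the events $\{X = x\}$, $\{X = x'\}$, $\{Y = y\}$, $\{Y = y'\}$ all have positive probability and that the conditional probabilities appearing in \eqref{eq:tp2c} and \eqref{eq:tp2d} are all strictly positive --- equivalently, that the four joint probabilities $\PP\{X = x, Y = y\}$, $\PP\{X = x, Y = y'\}$, $\PP\{X = x', Y = y\}$, $\PP\{X = x', Y = y'\}$ are strictly positive; this is the case of interest, and the only one arising in Proposition~\ref{pro:monot}.

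To reach \eqref{eq:tp2c}, substitute $\PP\{X = a, Y = b\} = \PP\{Y = b \cond X = a\}\,\PP\{X = a\}$ into the strict \tp\ inequality and cancel the positive common factor $\PP\{X = x\}\,\PP\{X = x'\}$, which turns it into
\[
\PP\{Y = y \cond X = x\}\,\PP\{Y = y' \cond X = x'\}
> \PP\{Y = y' \cond X = x\}\,\PP\{Y = y \cond X = x'\};
\]
dividing both sides of this by the positive quantity $\PP\{Y = y' \cond X = x\}\,\PP\{Y = y \cond X = x\}$ rearranges it to \eqref{eq:tp2c}, and since every quantity cancelled is strictly positive each step is an equivalence. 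To reach \eqref{eq:tp2d}, note that the strict \tp\ condition is symmetric under interchanging $X$ and $Y$, so the identical computation with the roles of $X$ and $Y$ swapped --- conditioning on $Y$ instead of on $X$, cancelling $\PP\{Y = y\}\,\PP\{Y = y'\}$ and then $\PP\{X = x' \cond Y = y\}\,\PP\{X = x \cond Y = y\}$ --- produces \eqref{eq:tp2d}.

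The substance here is nothing more than elementary algebra, and I anticipate no genuine obstacle; the only point deserving a moment's attention is keeping track of which quantities are strictly positive so that each cancellation is reversible, and, if one wished to state these equivalences without the positivity hypothesis above, sorting out suitable conventions for the ratios in the boundary cases --- a matter irrelevant to the application that I would not belabor.
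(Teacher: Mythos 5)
Your proof is correct: the paper omits the argument entirely (``The proof of the next result is clear''), and your verification --- factoring each joint probability as a conditional times a marginal and cancelling strictly positive factors --- is exactly the routine computation being alluded to. Your explicit attention to which quantities must be positive for the cancellations to be reversible is a reasonable refinement consistent with the positivity clause in the paper's definition of strict \tp.
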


\begin{lem}
  \label{lem:tp2}
  The pair $(G_t^i, G_{t+1}^i)$ has a strict \tp\ joint distribution. 
\end{lem}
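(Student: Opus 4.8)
The plan is to condition on the value of $G_t^i = m$ and express the conditional distribution of $G_{t+1}^i$ in terms of a single random mechanism that is "driven upward" by $m$. Given that individual $I_{0,i}$ has exactly $m$ genealogical descendants in generation $t$, the number $G_{t+1}^i$ is the number of individuals in generation $t+1$ that pick at least one of their two parents from among those $m$ individuals; since each of the $n$ individuals in generation $t+1$ chooses its two parents independently and uniformly from the $n$ individuals of generation $t$, we have, conditionally on $G_t^i = m$, that $G_{t+1}^i \sim \Bin\!\left(n, 1 - \left(1 - m/n\right)^2\right)$, independently of everything else. Write $p_m = 1 - (1-m/n)^2$, and note that $p_m$ is strictly increasing in $m$ for $0 \le m \le n$. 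So the real content is: if $p < p'$, then $\Bin(n,p)$ and $\Bin(n,p')$ are strictly MLR-ordered, i.e. the likelihood ratio $\binom{n}{y}p'^{\,y}(1-p')^{n-y} \big/ \binom{n}{y}p^{\,y}(1-p)^{n-y} = \left(\tfrac{p'}{p}\right)^y \left(\tfrac{1-p'}{1-p}\right)^{n-y}$ is strictly increasing in $y$, which is immediate because $\tfrac{p'(1-p)}{p(1-p')} > 1$. Equivalently, the pair (mixing index, Binomial outcome) is strict $\tp$ when the mixing index enters monotonically through the success probability.

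Concretely, I would argue as follows. Let $\pi_m = \PP\{G_t^i = m\}$, $m = 0, \dots, n$; by time $t \ge 1$ (and certainly $t \ge 2$, though the lemma should hold for all $t$ with $G_{t+1}^i$ defined) every value $0 \le m \le n$ for which $\pi_m > 0$ occurs, and the conditional law of $G_{t+1}^i$ given $G_t^i = m$ is $\Bin(n,p_m)$ as above. Then for $0 \le m < m' \le n$ and $0 \le y < y' \le n$ with $\pi_m, \pi_{m'} > 0$,
\[
\frac{\PP\{G_{t+1}^i = y' \cond G_t^i = m'\}}{\PP\{G_{t+1}^i = y' \cond G_t^i = m\}}
= \left(\frac{p_{m'}}{p_m}\right)^{y'}\left(\frac{1-p_{m'}}{1-p_m}\right)^{n-y'}
> \left(\frac{p_{m'}}{p_m}\right)^{y}\left(\frac{1-p_{m'}}{1-p_m}\right)^{n-y}
= \frac{\PP\{G_{t+1}^i = y \cond G_t^i = m'\}}{\PP\{G_{t+1}^i = y \cond G_t^i = m\}},
\]
where the strict inequality uses $p_m < p_{m'}$, hence $\tfrac{p_{m'}/p_m}{(1-p_{m'})/(1-p_m)} > 1$, raised to the power $y' - y > 0$. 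This is exactly condition \eqref{eq:tp2c} of the preceding lemma for the pair $(X,Y) = (G_t^i, G_{t+1}^i)$, so that pair has a strict $\tp$ joint distribution, which is the claim. (One must also note $0 < p_m < 1$ for $0 < m < n$ so the ratios make sense; the boundary cases $m=0$ and $m=n$ are handled separately or simply excluded as not contributing any admissible pair $m < m'$ with both probabilities positive and the inequality non-degenerate — alternatively observe $p_0 = 0$, $p_n = 1$ still fit the monotone-likelihood-ratio computation with the usual conventions.)

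The step I expect to require the most care is the first one: justifying cleanly that, conditionally on $G_t^i = m$, the variable $G_{t+1}^i$ really is $\Bin(n, p_m)$. This uses the Markov structure of the model — that the parent-choices made by generation $t+1$ are independent of the entire history up to generation $t$, and in particular of which specific $m$ individuals in generation $t$ are the descendants of $I_{0,i}$ — so that only the \emph{cardinality} $m$ of the descendant set matters. Once that reduction is in hand, the $\tp$ property is just the elementary monotone-likelihood-ratio calculation for the binomial family displayed above, combined with the strict monotonicity of $m \mapsto p_m = 1 - (1 - m/n)^2$.
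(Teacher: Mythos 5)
Your proposal is correct and follows essentially the same route as the paper: both identify the conditional law of $G_{t+1}^i$ given $G_t^i=m$ as $\Bin\bigl(n,\,2m/n-(m/n)^2\bigr)$ and then verify condition (\ref{eq:tp2c}) via the standard strict monotone-likelihood-ratio property of the binomial family in its success probability. The only cosmetic difference is that the paper checks monotonicity of the ratio of consecutive outcome probabilities in $m$, whereas you check monotonicity in the outcome of the ratio across two values of $m$; these are equivalent verifications of the same inequality.
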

\begin{proof}
  We will show condition (\ref{eq:tp2c}). 
By definition of our model, 
the number of genealogical descendants in generation $t+1$ 
has a conditional binomial distribution as follows:
\[
\PP\{G_{t+1}^i = k \cond G_t^i = r\} = \binom{n}{k}
\left(2r / n - (r/n)^2 \right)^k
\left( 1 -  2r / n + (r/n)^2 \right)^{n-k}.
\]
Set $x(r) = 2r / n - (r/n)^2$, a function that
 is strictly increasing in $r$ for
$0 \leq r \leq n$. Then
\[
\frac{\PP\{G_{t+1}^i = k+1 \cond G_t^i = r\}}
	{\PP\{G_{t+1}^i = k \cond G_t^i = r\}} =
	\frac{n-k}{k+1} \cdot \frac{x(r)}{1-x(r)},
\]
a function that is a strictly increasing function of $r$ for $1 \leq r \leq n$.
\end{proof}

The following definition is well known to statisticians
\citep{lehmannTesting86}. 
\begin{defn}
  \label{defn:mlr}
  Consider a reference measure $\mu$ on some space $\mathcal{X}$
  and a parameterized family 
  $\{p_\theta : \theta \in \Theta\}$ of probability densities with
  respect to $\mu$, where $\Theta$ is a subset of $\RR$.
  Let $T$ be a real-valued function defined on $\mathcal{X}$.
  The family of densities  has the \emph{monotone
  likelihood ratio} property in $T$ with respect to the parameter
  $\theta$ if for any
  $\theta' < \theta''$ the densities $p_{\theta'}$ and 
  $p_{\theta''}$ are distinct and
  $x \mapsto p_{\theta''}(x) / p_{\theta'}(x)$ is a nondecreasing function of $T(x)$.
\end{defn}

\begin{lem}
  \label{lem:incr_expect}
  Fix a time $t \ge 0$.
  If the function $f:\RR \rightarrow \RR$ 
  is strictly increasing, then the function $k \mapsto \EE[f(G_t^i) \cond
  G_{t+1}^i = k]$, $1 \leq k \leq n$, is strictly increasing.
\end{lem}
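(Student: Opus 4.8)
The plan is to recognize the asserted monotonicity as an instance of the standard fact that a family of densities with the monotone likelihood ratio property is stochastically ordered in its parameter. The natural family here is $\{p_k : 1 \le k \le n\}$, where $p_k(r) = \PP\{G_t^i = r \cond G_{t+1}^i = k\}$ is a probability density with respect to counting measure on $\{0,1,\dots,n\}$, the parameter is $\theta = k$, and the statistic is the identity $T(r) = r$. Once one knows this family has the monotone likelihood ratio property in $T$ with respect to $k$, the quantity $\EE[f(G_t^i)\cond G_{t+1}^i = k] = \EE_{p_k}[f(T)]$ is automatically a nondecreasing function of $k$ for every nondecreasing $f$, and the remaining issue is upgrading ``nondecreasing'' to ``strictly increasing.''

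First I would verify the monotone likelihood ratio property directly from what we already have. For $k < k'$ with both $\PP\{G_{t+1}^i = k\}$ and $\PP\{G_{t+1}^i = k'\}$ positive, Lemma~\ref{lem:tp2} says $(G_t^i, G_{t+1}^i)$ is strictly $\tp$, so condition~\eqref{eq:tp2d} gives, for every $r < r'$ at which the relevant probabilities are positive,
\[
\frac{\PP\{G_t^i = r' \cond G_{t+1}^i = k'\}}{\PP\{G_t^i = r' \cond G_{t+1}^i = k\}}
> \frac{\PP\{G_t^i = r \cond G_{t+1}^i = k'\}}{\PP\{G_t^i = r \cond G_{t+1}^i = k\}},
\]
that is, $r \mapsto p_{k'}(r)/p_k(r)$ is strictly increasing on the common support (and equals $+\infty$ at the extra, and largest, support point $r = n$ that can appear because $G_t^i = n$ forces $G_{t+1}^i = n$). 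In the language of Definition~\ref{defn:mlr} this is exactly the (strict) monotone likelihood ratio property in $T(r) = r$ with respect to $k$.

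Then I would run the classical sign-alignment argument that converts this into strict stochastic monotonicity. Fixing $k < k'$, strict monotonicity of the ratio $p_{k'}/p_k$ across the value $1$ produces a threshold $\rho$ with $p_{k'}(r) - p_k(r) \le 0$ for $r \le \rho$ and $\ge 0$ for $r \ge \rho$, and with strict inequality at some point on each side. Setting $a = f(\rho)$ (legitimate since $f$ is defined and strictly increasing on all of $\RR$), we get
\[
\EE_{p_{k'}}[f(T)] - \EE_{p_k}[f(T)]
= \sum_r \bigl(f(r) - a\bigr)\bigl(p_{k'}(r) - p_k(r)\bigr) > 0,
\]
since each summand is a product of two factors of the same sign and at least one summand is strictly positive. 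Unwinding the definitions, this says precisely that $k \mapsto \EE[f(G_t^i) \cond G_{t+1}^i = k]$ is strictly increasing.

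The step I expect to require the most care is the bookkeeping around strictness and supports. One must check that the conditional laws $p_k$ are genuinely non-degenerate: for the relevant range of $t$, the marginal law of $G_t^i$ charges every value in $\{1,\dots,n-1\}$, which follows by induction using the binomial transition kernel written down in the proof of Lemma~\ref{lem:tp2}; this guarantees that each $p_k$ has at least two atoms, so that the strict inequality in~\eqref{eq:tp2d} actually forces a strict separation between $p_{k'}$ and $p_k$ on both sides of $\rho$. One also has to confirm that the boundary behaviour at $r = n$ — where the support of $p_n$ strictly contains that of $p_k$ for $k < n$ — sits at the top of the $T$-range, so that it is compatible with the ``ratio nondecreasing in $T$'' clause of Definition~\ref{defn:mlr} and does not interfere with the argument. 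Everything else is a direct application of results already established.
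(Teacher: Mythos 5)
Your proof is correct and follows essentially the same route as the paper: both derive the monotone likelihood ratio property of the conditional densities $\PP\{G_t^i = r \cond G_{t+1}^i = k\}$ from Lemma~\ref{lem:tp2} via inequality (\ref{eq:tp2d}) and then invoke the standard fact that expectations of increasing functions are monotone in the parameter under MLR. The only difference is that where the paper simply cites Lemma~2(i) of \citep{lehmannTesting86} for that final step, you prove it from scratch with the threshold/sign-alignment argument and attend explicitly to the strictness and support issues.
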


\begin{proof}
  By Lemma~\ref{lem:tp2} and inequality (\ref{eq:tp2d}), the family of
  probability densities (with respect to counting measure)
  $\PP \left\{ G_t^i = r \cond G_{t+1}^i = k \right\}$ parameterized by 
  $k$ has monotone likelihood ratios in $r$ with respect to $k$. 
  Now apply Lemma~2(i) of \citep{lehmannTesting86}.
\end{proof}

\arxiv{
\begin{proof}[Proof of Proposition~\ref{pro:monot}]
  }
  \noarxiv{
  \noindent \emph{Proof of Proposition~\ref{pro:monot}:} \newline \noindent
  }
  For $t \ge 1$, let $\alpha_t (k) = k^{-1} \, \EE[Q_t^i \cond G_t^i = k]$.
  
  First note that each individual of $\calG_1^i$ has a $1/n$
  chance of choosing $I_{0,i}$ as a parent twice, thus 
   \[
   \alpha_1 (k) = (1 + 1/n).
  \]

  The result will thus follow by induction on $t$ if we can show 
  for $t \ge 1$ that the function
  $k \mapsto \alpha_{t+1} (k)$ is strictly increasing whenever
  the function $k \mapsto \alpha_t(k)$ is non-decreasing. 
  Therefore, fix $t \ge 1$ and suppose that the function
  $k \mapsto \alpha_t(k)$ is non-decreasing.

  We first claim that
  \begin{equation}
    k^{-1} \, \EE[Q_{t+1}^i \cond G_t^i = r, \, G_{t+1}^i = k] = 
    (r^{-1} + n^{-1}) \, \EE[Q_t^i \cond G_t^i = r] = f(r), 
    \label{eq:avpaths}
  \end{equation}
  where 
  \[
  f(r) = \left(r^{-1} + n^{-1} \right) \EE[Q_t^i \cond G_t^i = r] =
  \left( 1 + \frac{r}{n} \right) \alpha_t (r).
  \]
  The proof of this claim is as follows. 
  
 Recall that $\calG_t^i$ is the set of 
 generation $t$ individuals descended
  from $I_{0,i}$, so that $\calG_t^i$ has $G_t^i$ elements.
  Suppose that $G_t^i = r$ and number the elements of
  $\calG_t^i$ as $1,\cdots,r$. Let $V_{j,c}$ be the indicator
  random variable for the event that the allele
  $A_{t,j,c}$ is descended from one of the alleles of $I_{0,i}$ for $1
  \leq j \leq r$. By definition,
  the sum of the $V_{j,c}$ is equal to $Q_t^i$. 
  Note that any individual in $\calG_{t+1}^i$ has one parent uniformly
  selected from $\calG_t^i$ and the other uniformly selected from the
  population as a whole. Selections for different individuals are
  independent. Therefore, 
  \begin{align*}
    \EE[Q_{t+1}^i \cond G_t^i = r, \,& G_{t+1}^i = k, V_{1,1},
    V_{1,2}, \ldots, V_{r,1}, V_{r,2}] \\
    & = k \left[ 
    \frac{1}{r} \sum_{\ell=1}^r (V_{\ell,1} + V_{\ell,2}) + \frac{1}{n} \left(
    \sum_{\ell=1}^r (V_{\ell,1} + V_{\ell,2}) + \sum_{\ell =
    r+1}^n 0 \right) 
     \right] 
    \\
    & = k (r^{-1} + n^{-1}) Q_t^i
  \end{align*}
  By the tower property of conditional expectation,
  \[
  \EE[Q_{t+1}^i \cond G_t^i = r, \, G_{t+1}^i = k] = 
  k (r^{-1} + n^{-1}) \EE[Q_t^i \cond G_t^i = r, \, G_{t+1}^i = k].
  \]
  An application of the Markov property now establishes
  our claim (\ref{eq:avpaths}). Thus
  \[
  \begin{split}
    k^{-1} \, \EE[Q_{t+1}^i \cond G_{t+1}^i = k] 
    & = \sum_{r=1}^n k^{-1} \, \EE[Q_{t+1}^i \cond G_t^i = r, G_{t+1}^i = k] 
    \, \PP\{G_t^i = r \cond G_{t+1}^i = k\} \\
    & = \sum_{r=1}^n \, f(r) \, \PP\{G_t^i = r \cond G_{t+1}^i = k\} \\
    & = \EE[ f(G_t^i) \cond G_{t+1}^i = k]. \\
  \end{split}
  \]
  This is strictly increasing in $k$ by Lemma~\ref{lem:incr_expect} 
  and the observation that $f$ is strictly increasing.
\arxiv{ \end{proof} }
\noarxiv{ \QEDclosed }

\section{The mysterious shape in Figure~\ref{fig:qOverTime}}
\label{sec:mysterious}
In this section we investigate the shape of the curve relating the
number of descendant alleles to genealogical rank.  As shown in
Figure~\ref{fig:qOverTime}, this curve attains a characteristic shape
after several generations; the shape is maintained for a period
prior to the time when the genealogical MRCA appears. 
We show that this curve is essentially
the limiting ``tail-quantile'' of a normalized Poisson(2) branching
process.

An important component of our analysis will be a 
multigraph representing ancestry that we will call the {\em genealogy}.
A multigraph is similar to a graph except that multiple edges between
pairs of nodes are allowed. Specifically, a multigraph is an ordered pair
$(V,E)$ where $V$ is a set of nodes and $E$ is a multiset of unordered
pairs of nodes.

\begin{defn}
  \label{defn:ancestryGraph}
  Define the time $t$ \emph{ancestry multigraph} $\frakG_t$ as follows.  The
  nodes of this multigraph are the set of all individuals of
  generations zero through $t$; for any $0 < t' \le t$ connect an
  $I_{t',k}$ to $I_{t'-1,j}$ 
  if $I_{t',k}$ is descended from $I_{t'-1,j}$. If both parents of
  $I_{t',k}$ are $I_{t'-1,j}$, then add an additional edge connecting
  $I_{t',k}$ and $I_{t'-1,j}$. 
  Define the time $t$ \emph{genealogy} $\frakG_t^i$ to be the subgraph of
  $\frakG_t$ consisting of $I_{0,i}$ and all of its descendants
  $\bigcup_{t'=0}^t \calG_{t'}^i$ up to time $t$.
\end{defn}
  
\begin{defn}
  \label{defn:ancestryPath}
  We define an ancestry path in $\frakG_t^i$ to be a sequence of
  individuals $I_{0,i(0)}, I_{1,i(1)}, \cdots, I_{t,i(t)}$ with $i(0) = i$ where for each
  $0 < t' \leq t$, $I_{t'-1,i(t'-1)}$ is a parent of $I_{t',i(t')}$.
  Let $P_t^i$ be the number of ancestry paths in $\frakG_t^i$.
\end{defn}
We emphasize that a parent being selected twice by a single individual
results in a ``doubled'' edge;
paths that differ only in their choice of what edge to traverse
between parent to child are
considered distinct. Thus, each such doubled edge doubles the number of
ancestry paths that contain the corresponding parent-child pair.

Our result concerning the connection between the  curve in
Figure~\ref{fig:qOverTime} and the Poisson(2) branching process can be
stated as follows.
Define a random probability measure on the positive quadrant
that puts mass $1/n$ at each of the points
$(\EE[Q_t^i \cond \frakG_t], 2^{1-t} G_t^i)$.
We show below that this random probability measure converges in probability
to a deterministic probability measure concentrated
on the diagonal and has projections onto either axis given
by the limiting distribution of $2^{1-t} B_t$ as $t \rightarrow \infty$.

We may describe the convergence more concretely 
by using the idea of ``sorting by
the number of genealogical descendants'' as in the introduction;
using the notation introduced there, 
let the random variable $\sorted(t,k)$
denote the index of the individual in generation $0$ 
with the $k^{\mathrm{th}}$
greatest number of genealogical descendants
at time $t$.  Recall the non-increasing, continuous
function $\beta:(0,1) \rightarrow \RR_+$
defined in equation (\ref{E:def_beta}).

\begin{pro}
\label{pro:main}
Suppose that $0 < a < b < 1 - R(\{0\})$, so that $\infty > \beta(a) > \beta(b) > 0$.
Then
\[
\frac{1}{n(b-a)} \cdot
\#
\left\{an \le k \le bn : \EE \left[ \left. Q_t^{\sorted(t,k)} \right| \frakG_t\right] \in [\beta(b),\beta(a)]  \right\}
\]
converges to $1$ in probability as $t=t_n$ and $n$ go to infinity
in such a way that $2^{2t_n}/n \rightarrow 0$. 
\end{pro}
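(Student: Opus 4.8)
The plan is to reduce everything to two coupled approximations: first, that the conditional expectation $\EE[Q_t^i \mid \frakG_t]$ is well-approximated by a simple functional of the genealogy $\frakG_t^i$ (ultimately by $2^{1-t} G_t^i$), and second, that the empirical distribution of $\{2^{1-t} G_t^i : 1 \le i \le n\}$ converges to $R$, the law of $2W$'s relative $W$ (more precisely, $2^{1-t}G_t^i$ approximates $2W^i$ where the $W^i$ are i.i.d.\ copies of the branching-process limit). Granting both, the sorted values $2^{1-t}G_t^{\sorted(t,k)}$ for $an \le k \le bn$ land in $[\beta(b),\beta(a)]$ with probability tending to one simply because $\beta$ is the tail-quantile function of $R$ and is strictly decreasing and continuous on $(0, 1-R(\{0\}))$; transferring this from $2^{1-t}G_t^{\sorted(t,k)}$ to $\EE[Q_t^{\sorted(t,k)} \mid \frakG_t]$ then requires the first approximation to hold uniformly in $i$.

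First I would make the genetic side precise. From the path-counting setup of Definitions~\ref{defn:ancestryGraph}--\ref{defn:ancestryPath}, each ancestry path from $I_{0,i}$ to a generation-$t$ individual carries one of the two alleles of $I_{0,i}$ to a specific allele slot at time $t$, and whether that allele slot is actually occupied by a descendant allele is a question of whether distinct paths ``collide'' before time $t$. Conditionally on $\frakG_t$, $\EE[Q_t^i \mid \frakG_t]$ is an inclusion-exclusion sum over allele slots, and the leading term is essentially $2^{1-t} P_t^i$ (each of the $P_t^i$ paths delivers an allele with probability $\approx 2^{-(t-1)}$ per the two-alleles-out-of-a-pool-of-two mechanism, times the doubling bookkeeping). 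The correction terms are governed by pairwise path-coincidence probabilities, which are of order $2^{2t}/n$ per pair after summing; this is exactly why the hypothesis $2^{2t_n}/n \to 0$ appears. So I would prove $\EE[Q_t^i \mid \frakG_t] = 2^{1-t} P_t^i (1 + o(1))$ with the $o(1)$ uniform in $i$ in probability, using a first-moment/second-moment estimate on path collisions. Then I would compare $P_t^i$ and $G_t^i$: both are branching-type functionals of the same genealogical tree, $P_t^i$ counting paths (offspring $= \poi{2}$-like with doubled edges) and $G_t^i$ counting distinct individuals, and they agree to leading order once the genealogy is locally tree-like, again controlled by $2^{2t}/n \to 0$. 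The cleanest route is to show $2^{1-t}P_t^i$ and $2^{1-t}G_t^i$ both converge to the same limit $2W^i$.

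Next I would handle the genealogical limit. The key structural fact, going back to Chang's analysis, is that for $t$ growing slowly relative to $n$ (here $2^{2t}/n \to 0$ is more than enough, since Chang needs $t \sim \log_2 n$ only at the very end), the genealogies $\frakG_t^1, \dots, \frakG_t^n$ are, with high probability, disjoint trees each distributed approximately as the first $t$ generations of a $\poi{2}$ branching process (the offspring law is $\Bin(n, 2/n - 1/n^2)$ with the doubled-edge convention, which is $\poi{2}$ in the limit). By the martingale convergence recalled in the introduction, $2^{-t}B_t \to W$ a.s., so $2^{1-t}G_t^i \to 2W^i$ with the $W^i$ asymptotically i.i.d.\ $\sim$ (law of $W$); hence the empirical measure $\frac1n\sum_i \delta_{2^{1-t}G_t^i}$ converges in probability to $R$. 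I would make the ``asymptotically i.i.d.'' statement rigorous by a second-moment argument on the empirical measure: $\EE[\frac1n\sum \mathbf{1}\{2^{1-t}G_t^i \le x\}] \to R([0,x])$ uses single-genealogy convergence, and the variance is $o(1)$ because two genealogies are independent once we condition on their being disjoint, which holds with probability $1 - O(2^{2t}/n)$.

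Finally, assembling: let $\hat F_n(x) = \frac1n \#\{i : 2^{1-t}G_t^i > x\}$, the empirical tail. Convergence in probability of the empirical measure to $R$ gives $\sup_x |\hat F_n(x) - R((x,\infty))| \to 0$ in probability (a Glivenko--Cantelli-type step, legitimate since $R$ has at most one atom, at $0$, and we work on compact $x$-ranges bounded away from it). Then $2^{1-t}G_t^{\sorted(t,\lfloor cn\rfloor)}$ is the empirical $(1-c)$-tail-quantile, which converges to $\beta(c)$ uniformly for $c \in [a,b]$ by continuity and strict monotonicity of $\beta$ on $(0,1-R(\{0\}))$. Combining with the uniform approximation $\EE[Q_t^{\sorted(t,k)} \mid \frakG_t] = 2^{1-t}G_t^{\sorted(t,k)}(1+o(1))$ from the first two steps yields that all but an $o(n)$ fraction of indices $k \in [an,bn]$ satisfy $\EE[Q_t^{\sorted(t,k)}\mid\frakG_t] \in [\beta(b)-\epsilon, \beta(a)+\epsilon]$; a standard $\epsilon \downarrow 0$ argument, using that $\beta$ is continuous and the fraction of $k$ with $\EE[Q_t^{\sorted(t,k)}\mid\frakG_t]$ within $\epsilon$ of an endpoint is itself $o(1)$, closes the interval to exactly $[\beta(b),\beta(a)]$ in the limit. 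I expect the main obstacle to be the first step: showing that the inclusion-exclusion correction in $\EE[Q_t^i\mid\frakG_t]$ and the path-vs-individual discrepancy $P_t^i - G_t^i$ are both negligible \emph{uniformly in $i$ in probability} (not just in expectation), since a few genealogies could be atypically large and one must control the worst case; the hypothesis $2^{2t_n}/n \to 0$ is exactly calibrated to kill the dominant pairwise-collision term $\binom{G_t^i}{2} \cdot O(2^t/n)$ after a union bound, but the bookkeeping of higher-order collisions and the interaction with the random tree sizes will be the delicate part.
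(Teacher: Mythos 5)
Your overall architecture matches the paper's: couple the genealogies $\frakG_t^i$ to independent $\poi{2}$ branching processes on the high-probability event that no collisions occur, deduce convergence of the empirical measure of the normalized descendant counts to the law of the branching-process limit by a second-moment computation, convert that into convergence of the order statistics $2^{-t}G_t^{\sorted(t,\lfloor cn\rfloor)}$ to the tail-quantiles $2\beta(c)$, and close with the $\epsilon$-argument at the endpoints of $[a,b]$. Your second and third steps are essentially Lemmas~\ref{lem:joint} and~\ref{lem:etaConv} and the paper's proof of Proposition~\ref{pro:main}.

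Where you have misdiagnosed the problem is your first step. The relation $\EE[Q_t^i\mid\frakG_t] = 2^{1-t}P_t^i$ is an exact identity (Lemma~\ref{lem:pAndQ}), not a leading-order approximation with collision corrections of size $2^{2t}/n$. The reason is that $Q_t^i$ counts time-$t$ allele slots, and each slot has a \emph{unique} backward allelic lineage: conditionally on $\frakG_t$, that lineage moves through the multigraph deterministically except for an independent fair choice of slot at each of the $t-1$ intermediate generations, so the events ``the lineage of $A_{t,j,c}$ follows the ancestry path $\pi$'' are disjoint over the distinct paths $\pi$ terminating at that slot, each of conditional probability $2^{-(t-1)}$ once both alleles of $I_{0,i}$ are counted as $2^{-t}$ each. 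Summing over slots and paths gives exactly $2^{1-t}P_t^i$, with no inclusion--exclusion remainder, whether or not the genealogy is a tree. Consequently the obstacle you single out as the delicate part --- controlling path-coincidence corrections uniformly in $i$ --- does not exist. Note also that even if an approximation had been needed, uniformity in $i$ is the wrong target and likely unattainable (a union bound over $n$ individuals of events of probability $O(2^{2t}/n)$ yields only $O(2^{2t})$); the counting statement only requires that the approximation fail for $o(n)$ individuals, which follows from a first-moment bound and Markov's inequality, and this ``fraction of bad indices'' formulation is precisely what the empirical-measure convergence of Lemma~\ref{lem:etaConv} encodes. The only genuinely approximate comparison in the whole argument is between $P_t^i$ and $G_t^i$ (equivalently, between either of them and the branching process), and there your plan coincides with the paper's: the coupling plus the concentration of the limit measure $\eta$ on the diagonal.
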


\noindent
Note that the condition $2^{2t_n}/n \rightarrow 0$
is satisfied, for example, when $t_n= \tau \log_2 n$ for $\tau<1/2$. 

The proof of Proposition~\ref{pro:main} formalizes the following
three common-sense notions about the ancestry
process.  

Note that for $t>1$, the genealogy will not
necessarily be a tree: it may be possible to follow two
different ancestry paths through $\frakG_t^i$ to a given time-$t$
individual. However, our first intuition is that 
this possibility is rare when $n$ is
large and $t$ is small relative to $n$, and such events 
do not affect the values of $G_t^i$ and $Q_t^i$ in the limit.

Second, the fact that each of the above genealogies is usually a tree
suggests that we may be able to relate the ancestry process to a
branching process. In our case, the
number of immediate 
descendants for an individual $I_{t'-1,j}$ is the number of
times a individual of generation $t'$ chooses $I_{t'-1,j}$ as a parent.
These numbers are not exactly independent: for example, if all of the
individuals of generation $t'$ descend only from a single individual of
generation $t'-1$, then the number of descendants of the other
individuals is exactly zero. However, we will show that these numbers
 are close
to independent when $n$ becomes large. Also, note that the marginal
distribution of the number of next-generation descendants of a single
individual is binomial: there are $2n$ trials each with probability
$1/n$. As $n$ goes to infinity, this is approximately a Poisson(2) random
variable. In summary, we will show that the genealogy of an
individual is close to that of a Poisson(2) branching process for
short times relative to the population size. 

Third, we note that there is a simple relationship between the number
of paths $P_t^i$ and the expected number of descendant alleles $Q_t^i$:
\begin{lem}
  \label{lem:pAndQ}
  $\EE\left[\Qit \cond \frakG_t^i\right] = 2^{1-t} \Pit$.
\end{lem}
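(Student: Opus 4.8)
The plan is to compute $\EE[\Qit \cond \frakG_t^i]$ by conditioning on the entire genealogy and tracing genetic inheritance along its edges. Fix the genealogy $\frakG_t^i$, and for a time-$t$ individual $I_{t,j} \in \calG_t^i$ with alleles $A_{t,j,1}$ and $A_{t,j,2}$, consider the indicator that allele $A_{t,j,c}$ is a genetic descendant of one of the two alleles of $I_{0,i}$. I would argue that, conditionally on $\frakG_t^i$, each allele of a generation-$t'$ individual picks one of its (at most two, multiplicity-counted) parent edges uniformly at random and then copies one of the two alleles at the individual on the other end of that edge, again uniformly; and that these choices are independent across alleles and across generations, because conditioning on the genealogy only fixes which parent-individual choices were made, not which allele within a parent was copied nor, when the two parents coincide, which of the doubled edges a given allele used. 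So genetic ancestry, conditioned on $\frakG_t^i$, is a simple random walk backwards along the edges of the genealogy.

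Next I would exploit linearity of expectation. Write $\EE[\Qit \cond \frakG_t^i] = \sum_{j \in \calG_t^i} \sum_{c=1}^{2} \PP\{A_{t,j,c} \text{ descends from } I_{0,i} \cond \frakG_t^i\}$. For a single allele $A_{t,j,c}$, the probability it descends from $I_{0,i}$ equals the probability that the backwards allele-walk, which at each step from a generation-$t'$ node chooses uniformly among the $2 \cdot(\text{number of parent edges})$ available (allele, edge) pairs — but since every node in $\frakG_t^i$ has exactly two parent-edges (counted with multiplicity) and each has two alleles, that is a uniform choice among $4$ options determining (parent-edge, which allele), i.e. the walk moves to a uniformly chosen one of the two incoming edges — stays inside $\frakG_t^i$ all the way back to level $0$ and lands on $I_{0,i}$. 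Since $I_{0,i}$ is the unique root and the walk never leaves $\frakG_t^i$ once it is in it (every ancestor within $\frakG_t^i$ of a node in $\frakG_t^i$ is again in $\frakG_t^i$), this probability is exactly the probability that a backward walk that at each of the $t$ steps picks one of its two parent-edges uniformly terminates at the root, which by summing over edge-paths is (number of ancestry paths from $I_{0,i}$ to $I_{t,j}$)$\,\cdot\, 2^{-t}$. Summing over $j$ and over $c \in \{1,2\}$ gives $2 \cdot 2^{-t} \cdot \sum_{j} (\#\text{ancestry paths to } I_{t,j}) = 2^{1-t} \Pit$, which is the claim.

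The cleanest way to organize this is probably to prove the stronger pointwise statement by induction on $t$: letting $p_{t,j,c}$ be the conditional probability that $A_{t,j,c}$ descends from $I_{0,i}$ given $\frakG_t^i$, show $p_{t,j,c} = 2^{-t} \cdot(\#\text{ancestry paths from } I_{0,i} \text{ to } I_{t,j})$, using that $A_{t,j,c}$ descends from $I_{0,i}$ iff it copied an allele of some parent $I_{t-1,k}$ of $I_{t,j}$ that itself descends from $I_{0,i}$, together with the conditional-independence/uniformity of the within-individual and within-doubled-edge choices; the count of ancestry paths then satisfies the obvious additive recursion over parent edges. Summing $p_{t,j,c}$ over $j \in \calG_t^i$ and $c \in \{1,2\}$ and using $\Pit = \sum_j (\#\text{ancestry paths to } I_{t,j})$ finishes it.

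The main obstacle is making the conditional-independence claim airtight: one must check that conditioning on $\frakG_t^i$ (equivalently, on the parent-individual assignments, including which individuals chose $I_{0,i}$ or any given node twice) leaves the allele-level choices — which of a parent's two alleles each child allele copies, and, for a doubled parent-child edge, which of the two copies of that edge a given child allele used — uniform and mutually independent. This follows from the model's definition because each allele $A_{t',j,c}$ independently picks a uniform allele $A_{t'-1,k,d}$ from the previous generation; conditioning on the induced individual-level parent structure is conditioning on the event "$k \in \{$the two chosen parent individuals of $I_{t',j}\}$" for each allele, and conditionally the pair (which parent, which of its two alleles) remains uniform over the $4$ combinations (or, if the two parents coincide, over the $2$ copies of the doubled edge times $2$ alleles), independently over all $(t',j,c)$. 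I would state this as a short lemma and then the rest is bookkeeping. One should also note explicitly that the walk can never wander out of $\frakG_t^i$: since $\frakG_t^i$ is by definition the union of $I_{0,i}$ with all its descendants, any parent (within the population) of a node of $\frakG_t^i$ that is itself an ancestor-within-$\frakG_t^i$ is again in $\frakG_t^i$, so restricting attention to edges of $\frakG_t^i$ loses nothing; the only way an allele fails to descend from $I_{0,i}$ is that at some step its backward walk used a parent edge leading outside $\frakG_t^i$, which is exactly the complement accounted for by the $2^{-t} \times (\#\text{paths})$ formula.
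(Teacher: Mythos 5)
Your argument is correct and is essentially the paper's own proof: the paper likewise computes $\EE[Q_t^i \cond \frakG_t^i]$ by linearity of expectation over ancestry paths, with each path and each of the two alleles contributing $2^{-t}$ (your version spells out the conditional allele-level randomness that the paper leaves implicit). One small caution: conditionally on the genealogy, the two alleles of a child with two \emph{distinct} parents are not independent (exactly one descends from each parent), so your blanket independence claim is overstated; this is harmless here because the computation only uses the marginal law of each single allele's backward chain.
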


\begin{proof}
  Consider an arbitrary path in the ancestry graph $\frakG_t^i$ and pick
  an arbitrary edge in that path. 
  Suppose the edge connects $I_{t'-1,j}$ to $I_{t',i}$. 
  By the definition of the model, $I_{t',i}$ 
  has probability $1/2$  of inheriting any fixed allele of
  $I_{t'-1,j}$. Thus, the 
  contribution of any single allele of $I_{0,i}$ and 
  given path in $\frakG_t^i$ to the expectation of 
  $Q_t^{i}$ is $2^{-t}$. The contribution of both alleles of
  $I_{0,i}$ is $2^{1-t}$.
  The total number of alleles descended from the alleles of $I_{0,i}$
  is the sum over the contributions of all paths, and the expectation
  of this sum is the sum of the expectations. 
\end{proof}

We will use the 
probabilistic method of coupling to formalize the connection between
the genealogical process and the branching process.  A coupling of random variables $X$
and $Y$ that are not necessarily defined on the same probability
space is a pair of random variables $X'$ and $Y'$ defined on a single
probability
space such that the marginal distributions of $X$ and $X'$
(respectively, $Y'$ and $Y$) are the same.  A simple example of
coupling is ``Poisson thinning'', a coupling between an $X
\sim \poi{\lambda_1}$ and a $Y \sim \poi{\lambda_2}$ where $\lambda_1
\geq \lambda_2$. To construct the pair $(X',Y')$, one first
gains a sample for $X'$ by simply sampling from $X$. The sample from
$Y'$ is then gained by ``throwing away'' points from the sample for
$X'$ with probability $\lambda_2 / \lambda_1$; i.e. the distribution
for $Y'$ conditioned on the value $x$ for
$X'$ is just $\Bin(x, 1 - \lambda_2 / \lambda_1).$ 

We note that coupling is a popular tool for questions with
a flavour similar to ours.
Recently \citet{barbour07} has coupled an epidemics model to a branching
process  and \citet{durrettEa07} have used coupling to
analyze a model of carcinogenesis.

Recall that we defined $W_t = B_t / 2^t$, where $B_t$ is
a Poisson(2) branching processes started at time $t=0$
from a single individual, 
and we observed that the sequence of random variables
$W_t$ converges almost surely to a random variable $W$ with
distribution $R$. The
following lemma is the coupling result that will give the convergence of
the sampling distribution of the $\Pit$ and $\Git$ to $R$ in
Lemma~\ref{lem:etaConv} below.

\begin{lem}
  \label{lem:joint}
There is a coupling between $P_t^i$, $G_t^i$, and $B_t^i$, where
$B_t^1, B_t^2, \ldots$ is a sequence of independent
Poisson(2) branching processes, such that
for a fixed positive integer $\ell$ the probability
\[
\PP\{P_t^i = G_t^i = B_t^i, \, 1 \le i \le \ell\}
\]
converges to one as $n$ goes to infinity with
$t=t_n$ satisfying $2^{2t_n}/n \rightarrow 0$. 
\end{lem}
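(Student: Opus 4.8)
The plan is to build the coupling generation by generation, maintaining at each step a bijection between the genealogical descendants of $I_{0,i}$ (carrying along the count $\Git$ and the path count $\Pit$) and the individuals of the independent branching processes $B_t^i$, and to control the total probability that this bijection fails up to time $t_n$. First I would set up the "good event" $\mathcal{E}_t$ on which, for every $1 \le i \le \ell$, the genealogy $\frakG_t^i$ is a tree (no individual in generation $\le t$ is reached by two distinct ancestry paths from $I_{0,i}$, and in particular no doubled edges occur among the relevant pairs), the genealogies $\frakG_t^1, \ldots, \frakG_t^\ell$ are pairwise disjoint as vertex sets, and the number of generation-$t'$ individuals choosing a given descendant as parent follows the exact Poisson(2) law rather than its $\Bin(2n,1/n)$ approximation. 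On $\mathcal{E}_t$ all three quantities coincide: $\Pit = \Git$ because tree-ness means exactly one path per descendant and no edge-doubling, and $\Git = B_t^i$ because each individual contributes an independent offspring count with the prescribed distribution.

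Concretely I would construct the offspring numbers as follows. Reveal generation $t'$ from generation $t'-1$: each of the $n$ individuals in generation $t'$ picks two parents uniformly (with replacement) from generation $t'-1$. Conditioned on the descendant sets revealed so far, the number of choices landing on a given tracked individual is $\Bin(2n_{t'}, 1/n)$ where $n_{t'}$ is the number of generation-$t'$ children still "available" (not yet assigned as children of tracked individuals); Poisson thinning/coupling — exactly the construction recalled in the excerpt — pairs this with a $\poi{2}$ variable, and the coupling succeeds unless (a) a collision occurs (two tracked descendants share a child, or a tracked descendant's child is also picked by another of the $\ell$ genealogies, or a parent is chosen twice), or (b) the Binomial–Poisson coupling itself fails. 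I would then bound $\PP(\mathcal{E}_{t_n}^c)$ by a union bound over generations and pairs of individuals.

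The dominant term is the collision probability. At generation $t'$ there are on the order of $B_{t'}^i \approx 2^{t'}$ tracked descendants per genealogy; a birthday-type estimate gives collision probability per generation of order $(2^{t'})^2/n = 4^{t'}/n$, and summing over $t' \le t_n$ gives $O(4^{t_n}/n) = O(2^{2t_n}/n)$, which is exactly the hypothesis that drives the bound to zero. (The factor-of-two details and the contributions from the $\ell$-fold disjointness and the edge-doubling events are of the same or smaller order; the Binomial–Poisson coupling error per parent is $O(1/n)$ per offspring, summing to $O(2^{t_n}/n)$, which is negligible compared to $2^{2t_n}/n$.) One subtlety to handle carefully: the $B_t^i$ can die out or, conversely, be atypically large, so the "order $2^{t'}$" heuristic must be made rigorous — I would condition on the branching process trajectories, use $\EE[B_{t'}] = 2^{t'}$ and a first-moment bound on the number of ordered pairs of distinct descendants, $\EE[B_{t'}(B_{t'}-1)] \le \EE[B_{t'}^2] = O(4^{t'})$, so that after taking expectations the collision bound $O(4^{t'}/n)$ holds unconditionally. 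Assembling these estimates and letting $n \to \infty$ with $2^{2t_n}/n \to 0$ yields $\PP\{\mathcal{E}_{t_n}\} \to 1$, and on $\mathcal{E}_{t_n}$ we have $P_t^i = G_t^i = B_t^i$ for all $1 \le i \le \ell$, which is the claim.

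The main obstacle I expect is the bookkeeping for the collision bound: one must simultaneously track descendants across the $\ell$ genealogies, correctly account for the shrinking pool of "available" children as the revelation proceeds (so that the conditional offspring law stays Binomial with the right parameters), and verify that none of the second-order effects (edge-doubling, inter-genealogy overlaps, the Poisson-approximation defect) overwhelm the leading $O(2^{2t_n}/n)$ term. Everything else is a routine application of Poisson thinning and a union bound.
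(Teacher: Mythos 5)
Your proposal is correct and follows essentially the same route as the paper: a generation-by-generation coupling in which the offspring counts of tracked descendants are matched to independent Poisson(2) variables via Le Cam/Poisson thinning, with the failure probability per generation dominated by a birthday-type collision bound controlled through $\EE[B_{t'}^2]=O(4^{t'})$, summed to give $O(2^{2t_n}/n)$ and extended to the $\ell$ genealogies by a union bound. The only cosmetic difference is that the paper couples the aggregate offspring count of the whole tracked set (a single $\Bin$-to-Poisson step per generation) rather than maintaining a per-individual bijection, but the estimates and conclusion are identical.
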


\begin{proof}
We introduce the coupling between the ancestral process and the branching process
by looking first at 
the transition from generation $0$ to generation $1$.  Suppose that we
designate a set $S$ of $k$ individuals in generation $0$ and write $G$ for the
number of descendants these $k$ individuals have in generation $1$.

The probability that there is an individual in generation $1$ who
picks both of its parents from the $k$ designated individuals is
\[
1 - \left(1 - \left (k/n \right)^2 \right)^n \le k^2/n.
\]

Couple the random variable $G$ with a random variable $P$ 
that is the same as $G$ except that we (potentially repeatedly)
re-sample any generation $1$ individual who
chooses two parents from $S$ until it has at least one parent not
belonging to $S$. 
The random variable $P$ will have a binomial distribution 
with number of trials $n$ and success probability
\begin{equation}
\frac{2 \frac{k}{n} \left(1 - \frac{k}{n}\right)}{1-\frac{k^2}{n^2}}
=
\frac{2 \frac{k}{n}}{1 + \frac{k}{n}},
\end{equation}
which is simply the probability of an individual selecting exactly one
parent from the set of $k$ given that it does not select two.
By the above,
\[
\PP\{G \ne P\} \le \frac{k^2}{n}.
\]

By a special case of Le Cam's Poisson approximation result
\citep{MR2059709, MR0142174}, we can couple the random variable
$P$ to a random variable $Y$ that is Poisson distributed with
mean
\[
n \frac{2 \frac{k}{n}}{1 + \frac{k}{n}} = \frac{2 k}{1 + \frac{k}{n}}
\]
in such a way that
\[
\PP\{P \ne Y\} \le n \left(\frac{2 \frac{k}{n}}{1 + \frac{k}{n}}\right)^2
\le 4 \frac{k^2}{n}.
\]

Moreover, a straightforward argument using Poisson thinning shows that
we can couple the random variable $Y$ with a random variable $B$ that is Poisson distributed with
mean $2k$ such that
\[
\PP\{Y \ne B\} \le \left|2k - \frac{2 k}{1 + \frac{k}{n}}\right| \le 2 \frac{k^2}{n}.
\]

Putting this all together, we see that we can couple the random 
variables $G$, $P$, and $B$ together
in such a way that
\[
\PP(\neg \{G = P = B\}) \le 8 \frac{k^2}{n}
\]
where $\neg$ denotes complement.
Note that $B$ may be thought of as the sum of $k$ independent
random variables, each having a Poisson distribution with mean $2$.

Fix an index $i$ with $1 \leq i \leq n$.
Returning to the notation used in the rest of the paper, the above
triple $(G,P,B)$ correspond to $(G_t^i,P_t^i,B_t^i)$, and $k$ plays the
role of $G_{t-1}^i$.
Now suppose we start with one designated individual $i$ in the population
at generation $0$. 
Let $S_t$ denote the event
\[
\{P_t^i = G_t^i = B_t^i\}.
\]
The above argument shows that we can couple the
process $P^i$ with the branching process $B^i$ in such a way that
\[
\begin{split}
  \PP\{\neg S_t \}
& \le 
\PP\{\neg S_{t-1}\} 
+ 
\PP\{\neg S_t, \; S_{t-1}\} \\
& \le 
\PP\{\neg S_{t-1}\} 
+ \EE\left[8 \frac{B_{t-1}^2}{n} \right] \\
& \le
\EE\{\neg S_{t-1}\} + \frac{c 2^{2(t-1)}}{n} \\
\end{split}
\]
for a suitable constant $c$ (using standard formulae for moments
of branching processes).  Iterating this bound gives
\[
\PP\{\neg S_t\} \le \frac{c' 2^{2t}}{n}
\]
for a suitable constant $c'$.

This tells us that when $n$ is large, the random
variable $P_t^i$ is close to the random variable
$B_t^i$ not just for fixed times but more generally for times $t$ such that
$2^{2t} / n \rightarrow 0$. As mentioned above, this condition is
satisfied when $t= \tau \log_2 n$ for $\tau<1/2$. 

Next, we elaborate the above argument to handle
the descendants of $\ell$ individuals. Let $S_t^\ell$ denote the
event that the $\ell$ coupled triples of random variables are equal, that is,
\[
\{P_t^1 = G_t^1 = B_t^1, \, P_t^2 = G_t^2 = B_t^2, \ldots, P_t^\ell =
G_t^\ell = B_t^\ell\},
\]
where $B_t^i$ is the branching process coupled to $P_t^i$ and $G_t^i$.
By mimicking the above argument, we can show that
\[
  \PP\{\neg S_t^\ell\} \leq \PP\{\neg S_{t-1}^\ell\} + c \ell \,
  2^{2(t-1)} / n.
\]
Again, iterating this bound gets
\[
  \PP\{\neg S_t^\ell\} \leq c' \ell \, 2^{2t} / n
\]
for some $c'$. 
\end{proof}

For any Borel subset $C$ of $\RR_+^2$,
let $\eta_{t,n}(C)$ denote the joint empirical distribution of
the normalized $P_t^i$ and the normalized $G_t^i$ at time $t$, i.e.
\[
\eta_{t,n}(C) = \frac{1}{n} \cdot \#\{1 \le i \le n: (2^{-t} P_t^i, 2^{-t} G_t^i) \in C\}.
\]
In Lemma~\ref{lem:etaConv} we demonstrate that the $\eta_{t,n}$ converge in
probability to the deterministic probability measure
$\eta(dx,dy) = R(dx) \delta_x(dy) = \delta_y(dx) R(dy)$ 
concentrated on the
diagonal, where $\delta_z$ denotes the unit point mass at $z$.

The mode of convergence may require a bit of explanation. When we say that
a real-valued random variable converges in probability to a fixed
quantity, there is an implicit and commonly understood notion of
convergence of a sequence of real numbers. However, here the random
quantities are probability measures, and the underlying
notion we use for
convergence of measures is that of weak convergence. 
Recall that a sequence
of probability measures $\mu_n$ on $\RR_+^2$
is said to converge to $\mu$ weakly if $\int f \, d\mu_n$ converges to
$\int f \, d\mu$ for all bounded continuous functions $f: \RR_+^2 \rightarrow \RR$.
The following are equivalent conditions for sequence of probability measures
$\mu_n$ to converge to $\mu$ weakly: (i) $\limsup_n \mu_n(F) \le \mu(F)$
for all closed sets $F \subseteq \RR_+^2$, (ii) $\liminf_n \mu_n(G) \ge \mu(G)$ for all  open
sets $G \subseteq \RR_+^2$, (iii) $\lim_n \mu_n(A) = \mu(A)$ for all Borel sets $A \subseteq \RR_+^2$
such that $\mu(\partial A) = 0$, where $\partial A$ is the boundary of $A$.

\begin{lem}
  \label{lem:etaConv}
  Suppose that $t=t_n$ converges to infinity
  as n goes to infinity in such a way that
  $\lim_{n \rightarrow \infty} 2^{2t_n}/n \rightarrow 0$.
  Then the sequence of random measures 
  $\eta_{t,n}$ converges in
  probability as $n \rightarrow \infty$
  to the deterministic probability measure $\eta$
  on $\RR_+^2$
  that assigns mass $R(A \cap B)$ to sets of the form $A \times B$.
\end{lem}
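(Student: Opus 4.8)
The plan is to use the coupling of Lemma~\ref{lem:joint} to replace $\eta_{t,n}$ by the empirical measure of $n$ independent Poisson(2) branching processes, and then to recognize the latter as the empirical measure of an i.i.d.\ sample whose common law converges weakly to $R$, so that a triangular-array version of Varadarajan's theorem finishes the job.

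First I would run the construction in the proof of Lemma~\ref{lem:joint} not for a fixed number $\ell$ of founders but for all $n$ founders $I_{0,1},\ldots,I_{0,n}$ simultaneously, obtaining independent Poisson(2) branching processes $B_t^1,\ldots,B_t^n$ with $\PP\{P_t^i \neq B_t^i \text{ or } G_t^i \neq B_t^i\} \le c' 2^{2t}/n$ for every $i$ --- this per-index bound is exactly the output of the iteration carried out in that proof. Setting $N = \#\{1 \le i \le n : P_t^i \neq B_t^i \text{ or } G_t^i \neq B_t^i\}$, linearity of expectation gives $\EE[N] \le c' 2^{2t}$, so $N/n \to 0$ in probability by Markov's inequality and the hypothesis $2^{2t_n}/n \to 0$. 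Writing $\tilde\eta_{t,n}$ for the empirical measure of the points $(2^{-t}B_t^i, 2^{-t}B_t^i)$, $1 \le i \le n$, we then have $\sup_C |\eta_{t,n}(C) - \tilde\eta_{t,n}(C)| \le N/n \to 0$ in probability, so it suffices to prove the assertion with $\tilde\eta_{t,n}$ in place of $\eta_{t,n}$. Next I would identify the limit: the points $(2^{-t}B_t^i, 2^{-t}B_t^i)$ are i.i.d., so $\tilde\eta_{t,n}$ is the empirical measure of $n$ i.i.d.\ samples from $\mu_n$, the image of the law of $W_{t_n} = 2^{-t_n}B_{t_n}$ under $x \mapsto (x,x)$; since $W_t \to W$ almost surely and $t_n \to \infty$, the law of $W_{t_n}$ converges weakly to $R$, and by continuity of $x \mapsto (x,x)$ the measures $\mu_n$ converge weakly to $\eta$ (which indeed assigns mass $R(A \cap B)$ to $A \times B$).

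It remains to establish the following triangular-array statement: if $X_1^{(n)},\ldots,X_n^{(n)}$ are i.i.d.\ with law $\mu_n$ on $\RR_+^2$ and $\mu_n$ converges weakly to $\mu$, then $\frac1n\sum_{i=1}^n \delta_{X_i^{(n)}}$ converges weakly in probability to $\mu$. For a fixed bounded continuous $f$ one has $\EE[\langle f, \frac1n\sum_i \delta_{X_i^{(n)}}\rangle] = \langle f, \mu_n\rangle \to \langle f,\mu\rangle$ and $\var(\langle f, \frac1n\sum_i \delta_{X_i^{(n)}}\rangle) \le \|f\|_\infty^2/n \to 0$, so $\langle f, \frac1n\sum_i \delta_{X_i^{(n)}}\rangle \to \langle f,\mu\rangle$ in probability. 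Weak convergence of probability measures on the separable metric space $\RR_+^2$ is metrized by $\rho(\nu_1,\nu_2) = \sum_{k\ge 1} 2^{-k}|\langle f_k,\nu_1\rangle - \langle f_k,\nu_2\rangle|$ for a suitable sequence of continuous functions $f_k$ with $\|f_k\|_\infty \le 1$; since each $|\langle f_k, \frac1n\sum_i \delta_{X_i^{(n)}}\rangle - \langle f_k,\mu\rangle|$ tends to $0$ in probability and the series is dominated by the constant $2$, the bounded convergence theorem for convergence in probability gives $\rho(\frac1n\sum_i \delta_{X_i^{(n)}}, \mu) \to 0$ in probability, which is the asserted weak convergence in probability. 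Combining this with the reduction above, and using that the uniform bound $\sup_C|\eta_{t,n}(C)-\tilde\eta_{t,n}(C)|\to 0$ dominates $\rho(\eta_{t,n},\tilde\eta_{t,n})$, completes the argument.

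The step I expect to be the main obstacle is the first one: one must check that the coupling of Lemma~\ref{lem:joint}, designed for finitely many founders, genuinely extends to all $n$ founders at once with independent limiting branching processes and a per-founder failure probability of order $2^{2t}/n$, so that the expected total number of founders for which the coupling fails is $o(n)$ --- this is exactly where the hypothesis $2^{2t_n}/n \to 0$ enters. After that reduction everything is soft: the entire $t\to\infty$ behaviour is carried by the almost sure convergence $W_t \to W$, and the final step is a routine weak law of large numbers for measure-valued averages, complicated only mildly by the sampling law $\mu_n$ varying with $n$.
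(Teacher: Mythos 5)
Your second and third steps (the identification of the limit via $W_t \to W$ and the triangular-array law of large numbers) are fine, but the first step --- the reduction to the empirical measure of $n$ \emph{independent} branching processes --- rests on a claim that does not follow from Lemma~\ref{lem:joint} and would itself require a substantial new argument. That lemma provides, for each \emph{fixed} $\ell$, a coupling under which the joint failure event $\neg S_t^\ell$ has probability at most $c'\ell\,2^{2t}/n$; taking $\ell=n$ makes this bound vacuous, and the per-index bound $\PP\{P_t^i\ne B_t^i \text{ or } G_t^i\ne B_t^i\}\le c'2^{2t}/n$ for a single \emph{global} coupling in which $B_t^1,\dots,B_t^n$ are mutually independent is not ``exactly the output of the iteration.'' The iteration controls the event that the coupling succeeds for all $\ell$ designated founders simultaneously, and its mechanism depends on the descendant sets of those founders being essentially disjoint, which fails completely at $\ell=n$: already in generation $1$ each child lies in the descendant sets of two founders, and by generation $t$ a typical individual lies in roughly $2^t$ of the sets $\calG_t^i$. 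The family $(G_t^i)_{i=1}^n$ is strongly dependent --- for instance $\sum_i G_1^i\le 2n$ deterministically, while the sum of $n$ independent $\poi{2}$ variables exceeds $2n$ with probability tending to $1/2$ --- so coupling it coordinatewise, with error $O(2^{2t}/n)$ per coordinate, to a fully independent family is a genuinely stronger statement that you would have to prove from scratch. You correctly flag this as the main obstacle, but as written it is a gap; note also that without mutual (or at least pairwise) independence of the $B_t^i$, the variance bound $\|f\|_\infty^2/n$ in your final step has no justification either.

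The paper circumvents all of this with a second-moment argument that only ever invokes the coupling for $\ell=1$ and $\ell=2$: by exchangeability of the founders, $\EE[(\int f\,d\eta_{t,n})^2]$ is asymptotically $\EE[f(H_t^1)f(H_t^2)]$, the two-founder coupling shows this factorizes in the limit as $(\int f(x,x)\,R(dx))^2$, and hence $\var[\int f\,d\eta_{t,n}]\to 0$. This is the standard device for empirical measures of exchangeable but dependent arrays, and it is precisely what lets one avoid the all-$n$ coupling your argument needs. To repair your proof you would either have to establish the global coupling statement (a nontrivial result of local-weak-convergence type) or replace your last two steps by the variance computation, at which point you have essentially reproduced the paper's proof.
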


\begin{proof}
For brevity, let $H_t^i$ denote the pair $(2^{-t} P_t^i, 2^{-t} G_t^i)$.
Fix a bounded continuous function $f: \RR_+^2 \rightarrow \RR$.
By definition, 
\[
\begin{split}
\EE\left[\left(\int f \, d \eta_{t,n}\right)^2\right] \\
& = n^{-2} \, \EE \left[\sum_i f^2(H_t^i) 
+ \sum_{i \ne j} f(H_t^i) f(H_t^j) \right] \\
& = n^{-2} \left(n \EE[f^2(H_t^1)] + n(n-1) \EE[f(H_t^1) f(H_t^2)] \right). \\
\end{split}
\]
Hence,
$\EE[(\int f \, d \eta_{t,n})^2]$
is asymptotically equivalent to
\begin{equation}
  \EE[f(H_t^1) f(H_t^2)].
  \label{eq:Ajoint}
\end{equation}
By definition, (\ref{eq:Ajoint}) is equal to
\begin{equation}
  \EE[f(2^{-t} P_t^1, 2^{-t} G_t^1) \, f(2^{-t} P_t^2, 2^{-t} G_t^2)].
  \label{eq:PandG}
\end{equation}
Lemma~\ref{lem:joint} establishes a coupling such that $P_t^i = G_t^i
= B_t^i$ with probability tending to one in the limit
under our hypotheses. Thus, under our conditions on $t = t_n$
the expectation (\ref{eq:PandG}), and hence 
$\EE[(\int f \, d \eta_{t,n})^2]$, converges to
\[
\begin{split}
\lim_{t \rightarrow \infty} \EE[f(W_t^1) f(W_t^2)] 
& = 
\lim_{t \rightarrow \infty} \EE[f(W_t^1, W_t^1)] \EE[f(W_t^2, W_t^2)] \\
& = \left(\int_{\RR_+} f(x,x) \, R(dx)\right)^2 \\
& = \left(\int_{\RR_+^2} f \, d \eta\right)^2. \\
\end{split}
\]

A similar but simpler argument shows that
$\EE[\int f \, d \eta_{t,n}]$ converges to $\int f \, d \eta$. Combining
these two facts shows that
$\var[\int f \, d \eta_{t,n}]$ converges to zero.

Therefore, $\int f \, d \eta_{t,n}$ converges in probability to
$\int f \, d \eta$ for all bounded continuous functions $f$, as required.
\end{proof}

\arxiv{ \begin{proof}[Proof of Proposition~\ref{pro:main}] }
  \noarxiv{ \emph{Proof of Proposition~\ref{pro:main}:} \noindent }

  It suffices by Lemma~\ref{lem:pAndQ} to show that
  \[
\frac{1}{n(b-a)}
\cdot
\#
\left\{an \le k \le bn : 2^{-t} P_t^{\sorted(t,k)}  \in [2\beta(b),2\beta(a)]  \right\}
\]
converges to $1$ in probability  as $t=t_n$ and $n$ go to infinity
in such a way that $2^{2t_n}/n \rightarrow 0$.

For $\gamma > 0$ and an integer $1 \le k \le n$,
\[
\begin{split}
& \eta_{t,n}(\RR_+ \times [\gamma,\infty)) \ge \frac{k}{n} \\
& \quad \Leftrightarrow
\#\{1 \le i \le n : 2^{-t} G_t^i \ge \gamma\} \ge k \\
& \quad \Leftrightarrow 
2^{-t} G_t^{\sorted(t,k)} \ge \gamma, \\
\end{split}
\]
by definition of the empirical distribution
$\eta_{t,n}$ and the indices $\sorted(t,k)$.
Because the limit measure $\eta$
assigns zero mass to the boundary $\RR_+ \times \{\gamma\}$
of the set $\RR_+ \times [\gamma,\infty)$, it follows
from Lemma~\ref{lem:etaConv} that 
\[
\frac{1}{n} \cdot \#\{1 \le i \le n : 2^{-t} G_t^i \ge \gamma\}
\]
converges to 
$\eta(\RR_+ \times [\gamma,\infty)) = R([\gamma,\infty))$
in probability.
In particular,
\[
\frac{1}{n} \cdot \#\{1 \le i \le n : 2^{-t} G_t^i \ge 2 \beta(c)\}
\]
converges to $c$ in probability
for $0 < c < 1 - R(\{0\})$.  Thus,
$2^{-t} G_t^{\sorted(t, \lfloor cn \rfloor)}$ converges
in probability to $2 \beta(c)$ for such a $c$.

With $0 < a < b < 1 - R(\{0\})$ as in the statement
of the proposition, it follows that
\[
\frac{1}{n}
\cdot \#
\left\{an \le k \le bn : 2^{-t} G_t^{\sorted(t,k)}  \in [2\beta(b-\epsilon),2\beta(a+\epsilon)]  \right\}
\]
converges in probability to $(b - a - 2 \epsilon)$
for $0 < \epsilon < (b-a)/2$.

Note by Lemma~\ref{lem:etaConv} that 
\[
\begin{split}
& \frac{1}{n}
\cdot \#
\left\{1 \le k \le n : 
\left| 2^{-t} P_t^{\sorted(t,k)} - 2^{-t} G_t^{\sorted(t,k)}\right|
> \delta  \right\} \\
& \quad =
\eta_{t,n}(\{(x,y) \in \RR_+^2 : |x-y| > \delta\}) \\
\end{split}
\]
converges in probability to $0$ for any $\delta>0$, because
the probability measure $\eta$ assigns all of its mass
to the diagonal $\{(x,y) \in \RR_+^2 : x=y \}$.  

Taking
$\delta < 2 \min\{\beta(a) - \beta(a+\epsilon), \beta(b-\epsilon) - \beta(b)\}$
so that
\[
[ 2 \beta (b-\varepsilon) - \delta, 2 \beta (a+\varepsilon) + \delta ]
\subseteq [ 2 \beta (b), 2 \beta (a) ],
\]
letting $n$ tend to infinity, and then sending
$\epsilon$ to zero completes the proof.

  \arxiv{\end{proof}}
  \noarxiv{\QEDclosed}

As an application of this proposition, one might wonder about the
number of descendant alleles of those individuals with many
genealogical descendants. It is imaginable that the number of
descendant alleles of each individual would stay bounded;
however, this is not the case. 

\begin{cor}
  Fix $y>0$, and suppose $t=t_n$ satisfies $\lim_{n
  \rightarrow \infty} 2^{2t_n}/n \rightarrow 0$.  With probability
  tending to one as $n$ goes to infinity, there will be an
  individual $i$ in the population at time $0$  
  such that $\EE \left[ \left. Q_t^i \right| \frakG_t\right] >
  y$.
\end{cor}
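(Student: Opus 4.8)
The plan is to reduce the statement to a fact about the number of ancestry paths and then invoke the empirical-measure convergence of Lemma~\ref{lem:etaConv}. By Lemma~\ref{lem:pAndQ} we have $\EE[\Qit \cond \frakG_t] = 2^{1-t}\Pit$, so it suffices to show that, with probability tending to one under our hypotheses on $t = t_n$, there is an index $i$ with $2^{-t}\Pit > y/2$. In the notation of Lemma~\ref{lem:etaConv}, the fraction of indices with this property is exactly
\[
\eta_{t,n}\bigl((y/2,\infty)\times\RR_+\bigr) = \frac{1}{n}\cdot\#\{1 \le i \le n : 2^{-t}\Pit > y/2\}.
\]

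Next I would verify that $(y/2,\infty)\times\RR_+$ is a continuity set for the limiting measure $\eta$. Its boundary is $\{y/2\}\times\RR_+$, and since $\eta$ is concentrated on the diagonal this boundary carries $\eta$-mass equal to $R(\{y/2\})$; because $y/2 > 0$ and $R$ is diffuse away from the origin, $R(\{y/2\}) = 0$. Hence, by Lemma~\ref{lem:etaConv} and the portmanteau characterization of weak convergence recalled just before it, $\eta_{t,n}\bigl((y/2,\infty)\times\RR_+\bigr)$ converges in probability, as $n \to \infty$ with $2^{2t_n}/n \to 0$, to $\eta\bigl((y/2,\infty)\times\RR_+\bigr) = R\bigl((y/2,\infty)\bigr)$.

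Finally I would use the fact, recorded in the introduction, that the support of $R$ is all of $\RR_+$; in particular $c_0 := R\bigl((y/2,\infty)\bigr) > 0$. Then
\[
\PP\bigl\{\eta_{t,n}((y/2,\infty)\times\RR_+) = 0\bigr\} \le \PP\bigl\{|\eta_{t,n}((y/2,\infty)\times\RR_+) - c_0| \ge c_0\bigr\} \longrightarrow 0,
\]
and on the complementary event there is an index $i$ with $\EE[\Qit \cond \frakG_t] = 2^{1-t}\Pit > y$, which is the assertion of the corollary. There is no real obstacle here beyond bookkeeping: the only point needing a moment's care is the continuity-set verification, which is immediate from the diffuseness of $R$ on $(0,\infty)$. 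One could alternatively deduce everything from Proposition~\ref{pro:main} by taking $b$ close to $1 - R(\{0\})$ so that $\beta(b)$ is small, but invoking Lemma~\ref{lem:etaConv} directly is cleaner.
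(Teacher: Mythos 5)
Your main argument is correct, but it takes a genuinely different route from the paper. The paper's proof is two sentences: since the support of $R$ is all of $\RR_+$, the function $\beta$ is unbounded (as $c \to 0^+$), so one simply applies Proposition~\ref{pro:main} with $a < b$ both chosen small enough that $\beta(b) > y$; the proposition then guarantees, with probability tending to one, an index $k$ with $\EE[Q_t^{\sorted(t,k)} \cond \frakG_t] \in [\beta(b),\beta(a)] \subseteq (y,\infty)$. You instead bypass the proposition entirely and argue directly from Lemma~\ref{lem:pAndQ} and Lemma~\ref{lem:etaConv}: the set $(y/2,\infty)\times\RR_+$ is an $\eta$-continuity set because $R$ is diffuse away from $0$, so $\eta_{t,n}$ of that set converges in probability to $R\bigl((y/2,\infty)\bigr) > 0$, and a positive limit rules out the empty event. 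Every step checks out, and your mode of passing from the weak convergence in probability of Lemma~\ref{lem:etaConv} to convergence of the measure of a continuity set is exactly the manipulation the paper itself performs inside the proof of Proposition~\ref{pro:main}, so nothing new needs to be justified. What your route buys is a quantitatively stronger conclusion essentially for free: not merely one individual but an asymptotic fraction $R\bigl((y/2,\infty)\bigr)$ of the generation-$0$ population exceeds the threshold. What the paper's route buys is brevity, since Proposition~\ref{pro:main} has already done all the work.

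One small correction to your closing remark: to deduce the corollary from Proposition~\ref{pro:main} you would take $b$ (and $a < b$) \emph{close to $0$}, not close to $1 - R(\{0\})$. Since $\beta$ is non-increasing, pushing $b$ toward $1 - R(\{0\})$ makes $\beta(b)$ \emph{small}, and the proposition would then only locate individuals whose conditional expectation lies in an interval whose lower endpoint $\beta(b)$ is below $y$, which proves nothing. The unboundedness of $\beta$ near $0$ is precisely the point of the paper's one-line proof. This does not affect your main argument, which stands on its own.
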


\begin{proof}
  Because the support of the probability distribution $R$
  is all of $\RR_+$, the function $\beta$ is unbounded.  The result
   is then immediate from Proposition~\ref{pro:main} . 
\end{proof}

\section{The number of MRCAs and the number of descendant alleles per MRCA}
\label{sec:mrca}

There are a number of other interesting phenomena that seem more
difficult to investigate analytically but are interesting enough to deserve mention.
For the simulations of this section (and the one mentioned in the
introduction) we wrote a series of simple
\texttt{ocaml} programs which are available upon request. 

As mentioned in the introduction, it is not uncommon to get several
genealogical MRCAs simultaneously.  We denote the (random) time to 
achieve a genealogical MRCA for a
population of size $n$ by $\tmrca_n$. We denote the (random) number
of genealogical MRCAs for a population of size $n$ by $\nmrca_n$. The 
surprising dependence
of $\EE[\nmrca_n]$ on $n$ is shown in
Figure~\ref{fig:totNMrca}.

\begin{figure}
  \begin{center}
    \arxiv{
    \includegraphics[height=5cm]{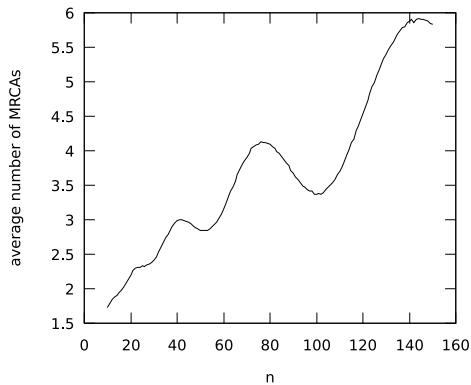}
    }
  \end{center}
  \caption{The dependence of the expected number of MRCAs
  on population size. Average of 10000 simulations.}
  \label{fig:totNMrca}
\end{figure}

However, the situation becomes clear by investigating the conditional
expectation $\EE[M_n | T_n]$ as shown in
Figure~\ref{fig:nMrcaCond}. According to the law of total expectation,
one can gain the expectation by taking the sum of conditional
expectations weighted by their probability. In this setting,
\begin{equation}
  \EE[\nmrca_n] = \sum_{k} \EE[\nmrca_n \cond \tmrca_n = k] \,
  \PP\{\tmrca_n = k\}
  \label{eq:mrcaTotalProb}
\end{equation}
First note in Figure~\ref{fig:nMrcaCond}~(a) that $\EE[\nmrca_n \cond
\tmrca_n = k]$ appears to be a
decreasing function of $n$ when $k$ is fixed. This is not too
surprising: imagine that we are doing simulations with $n$ individuals, but only looking at the
results of simulations such that $\tmrca_n = k$. When $n$ gets large,
simulations such that $\tmrca_n = k$ are ones which take an unusually
short time to reach $\tmrca$. It's not surprising to find that the
number of MRCAs would be small in this
case.
Conversely, simulations such that $\tmrca_n$ is significantly bigger
than $\log_2 n$ are ones that take an unusually long time; it is not surprising that
such simulations have a larger number of MRCAs as they have more
individuals ``ready'' to become MRCAs just before $\tmrca_n$.
This argument is bolstered by Figure~\ref{fig:ancestryAccumulation}
which shows that simulations resulting in different $\tmrca_n$'s have
remarkably similar behavior. Specifically, the distribution of the
number of genealogical descendants sorted by rank does not show a very
strong dependence on the time to most recent common ancestor
$\tmrca_n$.
Therefore simulations for a given population size that have a smaller $\tmrca_n$ have fewer
individuals who are close to being MRCAs while individuals with
larger $\tmrca_n$ have more.

\begin{figure}
  \begin{center}
    \arxiv{
    \includegraphics[height=5cm]{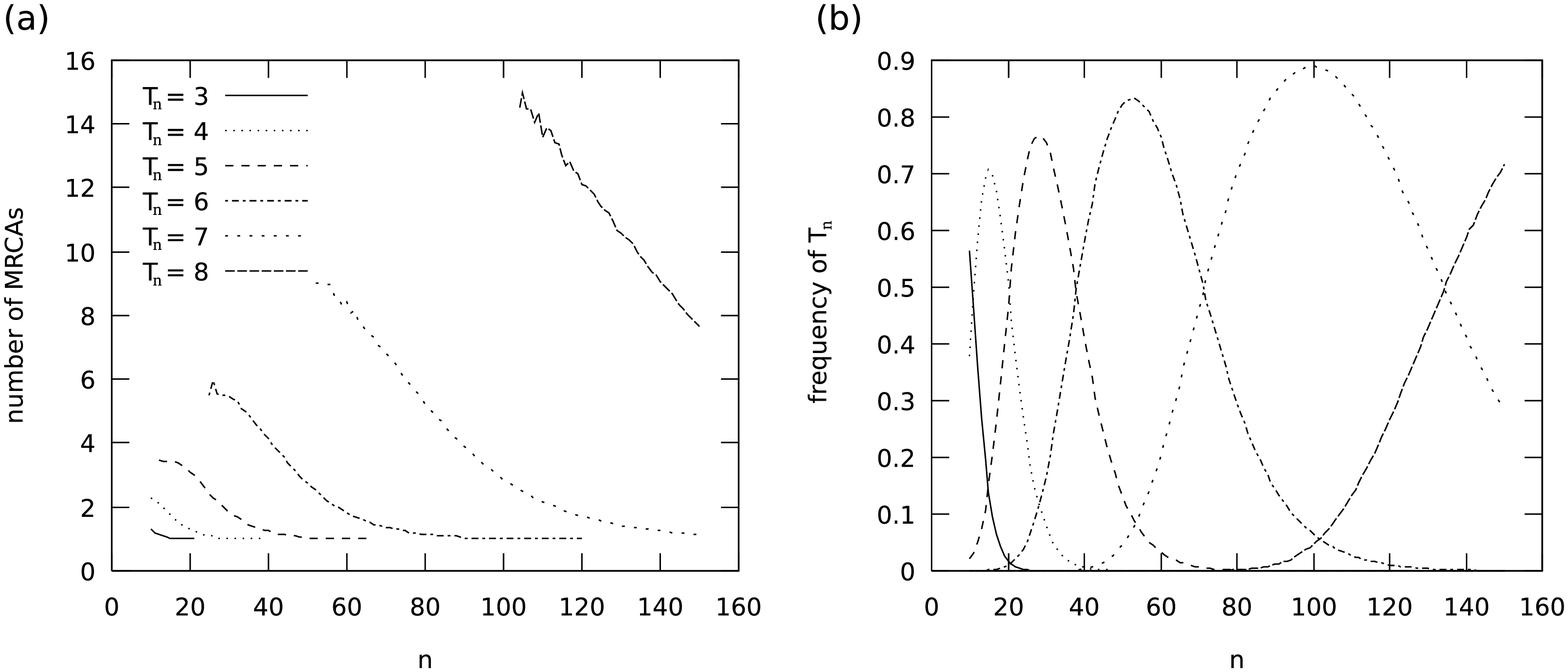}
    }
  \end{center}
  \caption{The number of MRCAs where the dependence on $\tmrca_n$ (the
  time to MRCA) is taken into account. Average of 10000 simulations.
  Figure (a) shows the number of MRCAs at $\tmrca_n$ conditioned on
  $\tmrca_n$.  Figure (b) shows the dependence of the distribution of
  times to MRCA on population size.  As described in the text, it is
  the combination of these two distributions using the law of total
  expectation that produces the ``bumps'' of
  Figure~\ref{fig:totNMrca}. Note that several simulations with
  ``extreme'' values of $\tmrca$ have been eliminated from (a) for
  clarity; these combinations of $\tmrca_n$ and population size are rare
  and thus we would not get an accurate estimate of the expectation.
  }
  \label{fig:nMrcaCond}
\end{figure}

Second, note in Figure~\ref{fig:nMrcaCond}~(b) that the distribution of $T_n$ has bumps such that 
(at least for integers $k>3$), there is an interval of $n$ such that $\PP\{T_n
= k\}$ is large in that interval. In such an interval we are
approximately on a single line of Figure~\ref{fig:nMrcaCond} (a), that
is, $\EE[\nmrca_n]$ is approximately $\EE[\nmrca_n \cond \tmrca_n =
\kappa_n]$ where $\kappa_n$ is the most likely value of $\tmrca_n$. 
This value is decreasing as described in the previous paragraph; thus
we should see a dip in $\EE[\nmrca_n]$. Indeed, from the plots of
Figures~\ref{fig:totNMrca} and \ref{fig:nMrcaCond} (b) it can be seen
that the dips in the number of MRCAs correspond to the peaks
of the probability of a given $\tmrca_n$. 

\begin{figure}
  \begin{center}
    \arxiv{
    \includegraphics[height=5cm]{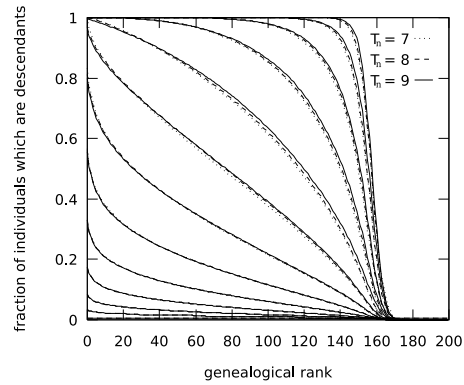}
    }
  \end{center}
  \caption{A plot of $\EE \left[ \left. G_t^{\sorted(t,k)} \, \right| \, \tmrca_n \right]$ through time conditioning on $\tmrca_n$.
  Each curve for a given choice
  of $T_n$ represents the expected state of the process at a given
  time. That is, each curve represents
  the image of the map $k \mapsto 
  \EE \left[ \left. G_t^{\sorted(t,k)} \, \right| \, \tmrca_n \right]$
  for some choice of $t$ and $\tmrca_n$.
  As described in the text, the curves show surprisingly little
  dependence on $\tmrca_n$, rather depending almost exclusively on
  $t$. Average of 10000 simulations with $n=200$.}
  \label{fig:ancestryAccumulation}
\end{figure}

Now we return to the genetic story considered in the rest of the
paper. The above considerations certainly apply when formalizing
questions such as ``how genetically related is the MRCA to individuals
of the present day?'' Clearly, there will often not be only one MRCA
but a number of them. Furthermore, the dynamics of the numbers of
MRCAs plays an important part in the answer to the question. 

In Figure~\ref{fig:totNTok} we show the number of alleles descended
from 
the union of the MRCAs as a function of $n$. This shows
oscillatory behavior as in Figure~\ref{fig:totNMrca}, however the
effect is modulated by the results shown in Figure~\ref{fig:tokAvg}.
Specifically, although the number of MRCAs decreases with $n$
conditioned on a value of $\tmrca_n$, the number of descendant alleles per MRCA is
actually increasing. The combination of these two functions appears to 
still be a decreasing function, which creates the ``dips'' in
Figure~\ref{fig:totNTok}. 

\begin{figure}
  \begin{center}
    \arxiv{
    \includegraphics[height=5cm]{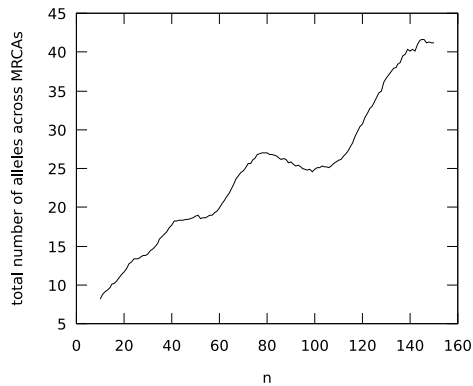}
    }
  \end{center}
  \caption{The number of alleles descended from the union of the MRCAs
  versus population size. This plot shows oscillatory behavior
  similar to that in Figure~\ref{fig:totNMrca} but the effect is dampened by the
  fact that the average number of alleles descended from each MRCA increases
  with $n$ as shown in Figure~\ref{fig:tokAvg}. Average of 10000
  simulations. Some simulations with ``extreme'' values of $\tmrca$
  were excluded for clarity as in Figure~\ref{fig:nMrcaCond}~(a). }
  \label{fig:totNTok}
\end{figure}

The apparent fact that, while fixing $\tmrca_n$, the average number of alleles descended from each MRCA 
appears to increase with $n$ deserves some explanation. As demonstrated in
Lemma~\ref{lem:pAndQ}, the expected number of descendant alleles of an individual
is a multiple of the number of paths to present-day ancestors in the
genealogy that individual.
Therefore, the fact needing explanation is the apparent increase in the
number of paths as $n$ increases. This can be explained in a way
similar to that for the conditioned number of MRCAs. Let us again fix 
$\tmrca_n = \kappa$ and vary $n$. When $n$ gets large, simulations
which the required value of $\tmrca_n$ have found a common ancestor
quite quickly. In these cases the ``ancient'' endpoints of the paths should be
tightly focused in
the most recent common ancestors. On the other hand, for small $n$ the
simulations have reached $\tmrca_n$ relatively slowly so the
distribution of paths is more diffuse. 
\begin{figure}
  \begin{center}
    \arxiv{
    \includegraphics[height=5cm]{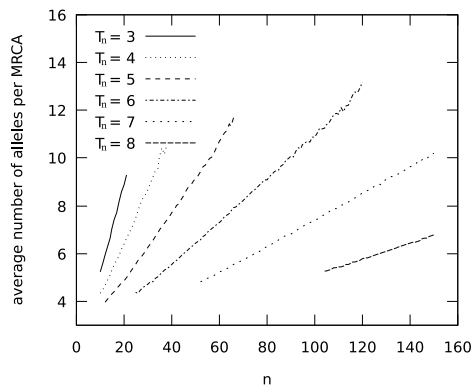}
    }
  \end{center}
  \caption{The average number of alleles descended from each MRCA conditioned on $\tmrca$.}
  \label{fig:tokAvg}
\end{figure}

\section{Conclusion}

We have investigated the connection between genetic ancestry and
genealogical ancestry in a natural genetic model extending the genealogical
model of \citet{chang99}. We have shown that an increased number
of genealogical descendants implies a super-linear increase in the number of
descendant alleles. We have tracked how the number of genetic
descendants depends on the number of genealogical descendants through time
and shown that it acquires an understandable shape for a period of
time before $\tmrca_n$ (the time of the genealogical MRCA). 
We have also investigated the
number of MRCAs at $\tmrca_n$, and the number of alleles descending
from the MRCAs, and explained their surprising oscillatory
dependence on population size using simulations. 

\subsubsection*{Acknowledgements}
The authors would like to thank C. Randal Linder and Tandy Warnow for
suggesting the idea of investigating the connection between genealogy
and genetics in this setting.
John Wakeley was a collaborator for some early explorations of these
questions and we gratefully acknowledge his part in this work.
Montgomery Slatkin has provided encouragement and interesting questions.
\noarxiv{
FAM is funded by the Miller Institute for Basic Research in
Science at the University of California, Berkeley.  SNE is supported
in part by NSF grant DMS-0405778. Part of this research was conducted
during a visit to the Pacific Institute for the Mathematical
Sciences.
}

\noarxiv{ \bibliographystyle{tpb} }
\arxiv{ \bibliographystyle{plainnat} }
\bibliography{mrca}

\begin{thebibliography}{20}
\providecommand{\natexlab}[1]{#1}
\providecommand{\url}[1]{\texttt{#1}}
\expandafter\ifx\csname urlstyle\endcsname\relax
  \providecommand{\doi}[1]{doi: #1}\else
  \providecommand{\doi}{doi: \begingroup \urlstyle{rm}\Url}\fi

\bibitem[Athreya and Ney(1972)]{athreyaNeyBranching72}
K.~B. Athreya and P.~E. Ney.
\newblock \emph{Branching processes}.
\newblock Springer-Verlag, New York, 1972.
\newblock Die Grundlehren der mathematischen Wissenschaften, Band 196.

\bibitem[Barbour(2007)]{barbour07}
A.~D. Barbour.
\newblock Coupling a branching process to an infinite dimensional epidemic
  process.
\newblock arXiv:0710.3697v1, 2007.

\bibitem[Brown et~al.(1981)Brown, Johnstone, and MacGibbon]{MR650893}
L.~D. Brown, I.~M. Johnstone, and K.~B. MacGibbon.
\newblock Variation diminishing transformations: a direct approach to total
  positivity and its statistical applications.
\newblock \emph{J. Amer. Statist. Assoc.}, 76\penalty0 (376):\penalty0
  824--832, 1981.
\newblock ISSN 0162-1459.

\bibitem[Chang(1999)]{chang99}
J.~T. Chang.
\newblock Recent common ancestors of all present-day individuals.
\newblock \emph{Adv. Appl. Prob.}, 31:\penalty0 1002--1026, 1027--1038, 1999.

\bibitem[Derrida et~al.(1999)Derrida, Manrubia, and
  Zanette]{derridaEAStatProp99}
B.~Derrida, S.C. Manrubia, and D.H. Zanette.
\newblock {Statistical Properties of Genealogical Trees}.
\newblock \emph{Phys. Rev. Lett.}, 82\penalty0 (9):\penalty0 1987--1990, 1999.

\bibitem[Derrida et~al.(2000{\natexlab{a}})Derrida, Manrubia, and
  Zanette]{derridaEAPhysica00}
B.~Derrida, S.C. Manrubia, and D.H. Zanette.
\newblock {Distribution of repetitions of ancestors in genealogical trees}.
\newblock \emph{Physica A}, 281\penalty0 (1-4):\penalty0 1--16,
  2000{\natexlab{a}}.

\bibitem[Derrida et~al.(2000{\natexlab{b}})Derrida, Manrubia, and
  Zanette]{derridaEATPB00}
B.~Derrida, S.C. Manrubia, and D.H. Zanette.
\newblock {On the Genealogy of a Population of Biparental Individuals}.
\newblock \emph{J. of Theo. Biol.}, 203\penalty0 (3):\penalty0 303--315,
  2000{\natexlab{b}}.

\bibitem[Donnelly et~al.(1999)Donnelly, Wiuf, Hein, Slatkin, Ewens, Kingman,
  and Chang]{discussion}
P.~Donnelly, C.~Wiuf, J.~Hein, M.~Slatkin, W.~J. Ewens, J.~F.~C. Kingman, and
  J.~T. Chang.
\newblock Discussion: Recent common ancestors of all present-day individuals.
\newblock \emph{Adv. Appl. Prob.}, 31:\penalty0 1002--1026, 1027--1038, 1999.

\bibitem[Durrett et~al.(2007)Durrett, Schmidt, and Schweinsberg]{durrettEa07}
R.~Durrett, D.~Schmidt, and J.~Schweinsberg.
\newblock A waiting time problem arising from the study of multi-stage
  carcinogenesis.
\newblock submitted to Annals of Applied Probability, 2007.

\bibitem[Grimmett and Stirzaker(2001)]{MR2059709}
G.~R. Grimmett and D.~R. Stirzaker.
\newblock \emph{Probability and random processes}.
\newblock Oxford University Press, New York, third edition, 2001.

\bibitem[Karlin(1968)]{karlinTotalPositivity68}
S.~Karlin.
\newblock \emph{Total positivity}.
\newblock Stanford University Press, Stanford, Calif, 1968.

\bibitem[Kingman(1982{\natexlab{a}})]{kingman82a}
J.~F.~C. Kingman.
\newblock The coalescent.
\newblock \emph{Stoch. Proc. Appl.}, 13:\penalty0 235--248, 1982{\natexlab{a}}.

\bibitem[Kingman(1982{\natexlab{b}})]{kingman82b}
J.~F.~C. Kingman.
\newblock On the geneology of large populations.
\newblock \emph{J. Appl. Prob.}, 19A:\penalty0 27--43, 1982{\natexlab{b}}.

\bibitem[Le~Cam(1960)]{MR0142174}
L.~Le~Cam.
\newblock An approximation theorem for the {P}oisson binomial distribution.
\newblock \emph{Pacific J. Math.}, 10:\penalty0 1181--1197, 1960.

\bibitem[Lehmann(1986)]{lehmannTesting86}
E.~L. Lehmann.
\newblock \emph{Testing statistical hypotheses}.
\newblock Wiley Series in Probability and Mathematical Statistics: Probability
  and Mathematical Statistics. John Wiley \& Sons Inc., New York, second
  edition, 1986.

\bibitem[Manrubia et~al.(2003)Manrubia, Derrida, and
  Zanette]{manrubiaEAAmSci03}
S.C. Manrubia, B.~Derrida, and D.H. Zanette.
\newblock {Genealogy in the era of genomics}.
\newblock \emph{Am. Scientist}, 91\penalty0 (2):\penalty0 158--165, 2003.

\bibitem[M\"{o}hle and Sagitov(2003)]{mohlesagitov03}
M.~M\"{o}hle and S.~Sagitov.
\newblock Coalescent patterns in diploid exchangeable population models.
\newblock \emph{J. Math. Biol.}, 47:\penalty0 337--352, 2003.

\bibitem[Murphy(2004)]{socsim}
M.~Murphy.
\newblock Tracing very long-term kinship networks using {SOCSIM}.
\newblock \emph{Demographic Res.}, 10\penalty0 (7):\penalty0 171--196, 2004.

\bibitem[Rohde et~al.(2004)Rohde, Olson, and Chang]{rohde04}
D.~L.~T. Rohde, S.~Olson, and J.~T. Chang.
\newblock Modelling the recent common ancestry of all living humans.
\newblock \emph{Nature}, 431:\penalty0 562--566, 2004.

\bibitem[Van~Mieghem(2005)]{vanMieghemLimitRV05}
P.~Van~Mieghem.
\newblock The limit random variable {W} of a branching process.
\newblock {\tt http://repository.tudelft.nl/file/391851/371241}, 2005.

\end{thebibliography}

\end{document}